\newcommand{\XCal}{\mathcal{X}}
\newcommand{\YCal}{\mathcal{Y}}
\newcommand{\SCal}{\mathcal{S}}
\newcommand{\RCal}{\mathcal{R}}
\newcommand{\LL}{\mathcal{L}}
\newcommand{\pbm}{\mathbb{P}}
\newcommand{\T}{\mathbb{T}}
\newcommand{\E}{\mathbb{E}}
\newcommand{\Ef}{\mathcal{E}}
\newcommand{\R}{\mathbb{R}}
\newcommand{\ICal}{\mathcal I}
\newcommand{\CCal}{\mathcal C}
\newtheorem{rmk}{Remark}[section]
\newtheorem{thm}{Theorem}[section]
\newtheorem{prop}{Proposition}[section]
\newtheorem{defn}{Definition}[section]
\newtheorem{asp}{Assumption}[section]
\def\argmin_#1{\underset{#1}{\mathrm{arg\,min\, }}}
\def\argmax_#1{\underset{#1}{\mathrm{arg\,max\, }}}
\def\dasharrowfill@#1#2#3#4{%
        $\m@th
        \thickmuskip0mu
        \medmuskip\thickmuskip
        \thinmuskip\thickmuskip
        \relax
        #4#1\mkern2mu
        \xleaders\hbox{$#4\mkern2mu#2\mkern2mu$}\hfill
        \mkern2mu
        #3$%
}
\def\dashleftarrowfill@{\dasharrowfill@\leftarrow\relbar\relbar}
\def\dashrightarrowfill@{\dasharrowfill@\relbar\relbar\rightarrow}
\def\dashleftrightarrowfill@{\dasharrowfill@\leftarrow\relbar\rightarrow}
\def\dashLeftarrowfill@{\dasharrowfill@\Leftarrow\Relbar\Relbar}
\def\dashRightarrowfill@{\dasharrowfill@\Relbar\Relbar\Rightarrow}
\def\dashLeftrightarrowfill@{\dasharrowfill@\Leftarrow\Relbar\Rightarrow}
\providecommand*\xdashleftarrow[2][]{%
  \ext@arrow 0055{\dashleftarrowfill@}{#1}{#2}}
\providecommand*\xdashrightarrow[2][]{%
  \ext@arrow 0055{\dashrightarrowfill@}{#1}{#2}}
\providecommand*\xdashleftrightarrow[2][]{%
  \ext@arrow 0055{\dashleftrightarrowfill@}{#1}{#2}}
\providecommand*\xdashLeftarrow[2][]{%
  \ext@arrow 0055{\dashLeftarrowfill@}{#1}{#2}}
\providecommand*\xdashRightarrow[2][]{%
  \ext@arrow 0055{\dashRightarrowfill@}{#1}{#2}}
\providecommand*\xdashLeftrightarrow[2][]{%
  \ext@arrow 0055{\dashLeftrightarrowfill@}{#1}{#2}}
\begin{document}

\title{Risk of Transfer Learning and its Applications in Finance}

\author{
Haoyang Cao
\thanks{Centre de Math\'ematiques Appliqu\'ees, Ecole Polytechnique.
\textbf{Email:}
haoyang.cao@polytechnique.edu}
\and
Haotian Gu
\thanks{Department of Mathematics, UC Berkeley.
\textbf{Email:} 
haotian{\textunderscore}gu@berkeley.edu }
\and
Xin Guo
\thanks{Department of Industrial Engineering \& Operations Research, UC Berkeley.
\textbf{Email:} 
xinguo@berkeley.edu}
\and
Mathieu Rosenbaum
\thanks{Centre de Math\'ematiques Appliqu\'ees, Ecole Polytechnique.
\textbf{Email:}
mathieu.rosenbaum@polytechnique.edu}
}
\date{\today
}
\maketitle

\begin{abstract}
Transfer learning is an emerging and popular paradigm for utilizing existing knowledge from  previous learning tasks to improve the performance of new ones.
In this paper,
we propose a novel concept of {\it transfer risk} and and analyze its properties to evaluate transferability of transfer learning. We apply transfer learning techniques and this concept of transfer risk to stock return prediction and portfolio optimization problems. Numerical results demonstrate a strong correlation between transfer risk and overall transfer learning performance, where transfer risk provides a computationally efficient way to identify appropriate source tasks in transfer learning, including cross-continent, cross-sector, and cross-frequency transfer for portfolio optimization.
\end{abstract}

\section{Introduction}
\paragraph{Transfer learning.} It is a popular paradigm in machine learning, with a simple idea:  leveraging knowledge from a well-studied learning problem (a.k.a. the source task) to enhance the performance of a new learning problem with similar features (i.e., the target task). In deep learning applications with limited and relevant data, transfer learning is a standard practice of utilizing large datasets (e.g., ImageNet) and their corresponding pre-trained models (e.g., ResNet50). It has enjoyed success across various fields, including natural language processing \citep{ruder2019transfer, devlin-etal-2019-bert}, sentiment analysis \citep{liu2019survey}, computer vision \citep{ganin2016domain, wang2018deep}, activity recognition \citep{cook2013transfer, wang2018stratified}, medical data analysis \citep{wang2022transfer, kim2022transfer}, bio-informatics \citep{hwang2010heterogeneous}, 
recommendation system \citep{pan2010transfer, yuan2019darec}, and fraud detection \citep{lebichot2020deep}.
See also various review papers such as \citep{survey1,tan2018survey,zhuang2020comprehensive} and the references therein. In the rapidly evolving AI landscape, where new machine learning techniques and tools emerge at a rapid pace, transfer learning is well suited as a versatile and enduring paradigm.  Meanwhile, the empirical successes of transfer learning has also encouraged theoretical studies of transfer learning, particularly in terms of quantifiable way of measuring whether transfer learning is suitable under given contexts; see for instance \citep{mousavi2020minimax}, \citep{nguyen2020leep}, \citep{you2021logme}, \citep{huang2022frustratingly}, \citep{nguyen2022generalization}, \citep{tripuraneni2020theory}, \citep{galanti2022generalization} and \citep{cao2023feasibility}.

\paragraph{Transfer learning in finance.}\label{subsec:tl-finance}
Transfer learning has recently gained its popularity in the field of finance, where limited data availability and excessive noise have hindered practitioners from accomplishing tasks such as equity fund recommendation \citep{zhang2018equity} and stock price prediction \citep{wu2022jointly,nguyen2019novel}.  Instead of starting from scratch for each specific task, it allows financial practitioners to capitalize on the knowledge and patterns accumulated from analogous tasks or domains, resulting in more accurate predictions and enhanced decision-making capabilities.

For instance,  \cite{zhang2018equity} addressed the issue of ``what to buy'' in equity fund investment by providing personalized recommendations; due the lack of transaction data in equity fund market, they utilized transfer learning and applied the profile of investors on the stock market to build that of the fund market; subsequently, this profile constituted an important role in the construction of the utility-based recommendation algorithm. \cite{leal2020learning} proposed a deep neural network controller for optimal trading on high frequency data; to overcome the scarcity of training data in high frequency trading, this deep neural network was first pretrained over simulated data, resulting in a good initialization for the fine-tuning process over genuine historical trading trajectories. In the work by \cite{wu2022jointly}, to improve the accuracy of stock trend prediction, the knowledge of industrial chain information was transferred to the prediction model via transfer learning: the deep learning models were first trained on stock indices of the upstream industry; then the best model, together with the model parameters, was transferred to predict the downstream industry, completing the industrial chain information transmission. \cite{doi:10.1080/14697688.2019.1622295} uncovered the existence of a universal price formation mechanism in financial markets via a large-scale deep learning model applied to a high-frequency database; they discovered that models trained over a dataset consisting of various types of assets exhibited superior generalization property compared with stock-specific models, and therefore provided empirical justification of the validity of applying transfer learning methods to financial problems. 

In fact, to assist in stock and market prediction, there have been a stream of works utilizing the advances in natural language processing to extract useful information from financial text. One such example is FinBERT \citep{liu2021finbert}, a financial text mining variant of the BERT model; to tackle the scarcity of labeled text data in the financial field, the authors designed six source tasks pretrained over large-scale general and domain-specific dataset, resulting in a financial text mining model outperforming the state-of-art models. For more examples and details, see survey papers on natural language based financial forecasting such as the work by \citep{xing2018natural}. 

Apart from predictive models, transfer learning helps improve trading decisions as well.  \cite{jeong2019improving} proposed a reinforcement learning-based trading system centering around a deep Q-network with a regressor network; with insufficient and highly volatile financial data, transfer learning techniques were adopted to overcome the overfitting problem. \cite{cartea2023bandits} proposed a two-layer data-driven execution algorithm:  the first strategic layer was to provide an optimal trading schedule for a sequence of orders; the second speculative layer was to employ a contextual bandit algorithm to output optimal execution strategy for each order; in order to allow transfer learning across different trading tasks in the second layer, the correlation among trading decisions sharing similar causal mechanisms was exploited for a better execution performance. 

Transfer learning techniques have also been applied to other areas of finance and economics:  \cite{lebichot2020deep} discussed and utilized domain adaptation, a special type of transfer learning technique, for the design of deep neural network-based automated fraud detection system so that companies were able to reuse the same pipeline to handle different payment systems; when utilizing quadratic rough Heston model to jointly calibrate SPX and VIX implied volatilities, transfer learning techniques were exploited to accelerate the training of the neural networks after adjusting the Hurst parameter in the model \citep{rosenbaum2021deep};  in order to improve the crude oil price prediction accuracy, \cite{cen2019crude} adopted the technique of  data transfer with prior knowledge to extend the size of training set for the deep learning prediction model consisting of long-short term memory units. 
  
\paragraph{Our work.}

 We propose a novel concept of {\it transfer risk}  to evaluate the potential benefit of transfer learning. Our form of transfer risk accounts for {\it both} the compatibility between the output and the input data {\it and} the compatibility between the models in the source and the target tasks,  allowing for the study of the trade-off between the two. We  establish several properties of transfer risk under the generic setting, including the continuity of transfer risk with respect to source tasks.  We also analyze the properties of two specific forms of transfer risk, namely the KL and Wasserstein-based transfer risk, and provide bound analysis for these two forms of transfer risk and  their relation. Moreover, we establish a connection between transfer risk and learning outcome for Gaussian-based models. These additional properties suggest  that our notion of transfer risk can be an computationally efficient indicator for the potential transfer learning performance and for  selecting proper source tasks for a given target task.

To test the relevance of this notion of transfer risk, we apply transfer learning techniques to two financial problems, namely, stock return prediction and portfolio optimization. 
For the stock prediction experiments, we first perform a signature transform over the daily return and volume data, formulating the prediction task as a regression problem; we then compare the prediction results between the direct learning  and the transfer learning approaches. The results show the consistency of the transfer risk with classical statistical metrics, and demonstrate that improved prediction accuracy can be achieved under appropriate transfer learning setting as opposed to the direct learning. 

For the portfolio optimization experiments, we test the performance of  transfer risk under three tasks, namely, cross-continent transfer,  cross-sector transfer,  and cross-frequency transfer. 
\begin{itemize}
    \item 
In the cross-continent transfer, which is to transfer a portfolio from the US equity market to other equity markets, our study shows different performances for different international markets. For instance,  transfer learning from the US market outperforms direct learning for Germany, but it performs relatively poorly for the Brazil market. This suggest that portfolios from the US market are better source tasks for the former than for the latter.
\item For the cross-sector transfer, which is to transfer a portfolio from one sector to different sectors, our analysis reveals that transfer risks in Health Care and Information Technology sectors display large negative correlations. In contrast, correlations are not significant for Utilities and Real Estate.

\item Regarding the cross-frequency transfer, which is to transfer a low-frequency portfolio to the mid-frequency domain, our results indicate that transferring a low-frequency portfolio (one-day) to higher frequencies (intraday) carries  high transfer risks with poor performances. In contrast, transferring between the mid and high-frequency regimes yields more robust and promising outcomes.
\end{itemize}

\section{Preliminary: Mathematical Framework of Transfer Learning}
 
 In a supervised setting, transfer learning consists of two tasks:  a source task $S$ and a target task $T$.
 The idea of transfer learning is to leverage the knowledge from the source task to improve the performance of the target task.
 
 If one fixes a probability space $(\Omega,\mathcal{F},\pbm)$, then the transfer learning can be formalized in an optimization framework as proposed by \cite{cao2023feasibility}. 
 In this optimization framework, the target task \(T\) is depicted as an optimization problem,
\begin{equation}\label{eq: obj-t}
    \min_{f\in A_T}\LL_T(f_T)=\min_{f_T\in A_T}\E[L_T(Y_T,f_T(X_T))].
\end{equation}
Here, $(X_T,Y_T)$ is a pair of $\XCal_T\times\YCal_T$-valued random variables, with $\XCal_T$ and $\YCal_T$ called target input and output spaces such that  $(\XCal_{T},\|\cdot\|_{\XCal_{T}})$ and $(\YCal_{T},\|\cdot\|_{\YCal_{T}})$ are Banach spaces with norms $\|\cdot\|_{\XCal_{T}}$ and $\|\cdot\|_{\YCal_{T}}$, respectively.
The function $L_T:\YCal_T\times\YCal_T\to\R$ is real-valued, and correspondingly $\LL_T(f_T)$ is a loss function measuring a model $f_T:\XCal_T\to\YCal_T$ for the target task $T$. The set $A_T$ denotes the collection of target models such that
 \begin{equation}\label{eq: a-t}
A_T\subset 
\{f_T|f_T:\XCal_T\to\YCal_T\}.
\end{equation}
Similarly, the source task \(S\) can be  defined  as an optimization problem of 
\begin{equation}\label{eq: obj-s}
    \min_{f_S\in A_S}\LL_S(f_S)= \min_{f\in A_S}\E[L_S(Y_S,f_S(X_S))].
\end{equation}
Here, $(X_S,Y_S)$ is a pair of $\XCal_S\times\YCal_S$-valued random variables, with $\XCal_S$ and $\YCal_S$ called source input and output spaces such that  $(\XCal_{S},\|\cdot\|_{\XCal_{S}})$ and $(\YCal_{S},\|\cdot\|_{\YCal_{S}})$ are Banach spaces with norms $\|\cdot\|_{\XCal_{S}}$ and $\|\cdot\|_{\YCal_{S}}$, respectively.
The function $L_S:\YCal_S\times\YCal_S\to\R$ is real-valued, and correspondingly $\LL_S(f_S)$ is a loss function measuring a model $f_S:\XCal_S\to\YCal_S$ for the source task $S$. The set $A_S$ denotes the collection of source models such that
 \begin{equation}\label{eq: a-s}
A_S\subset 
\{f_S|f_S:\XCal_S\to\YCal_S\}.
\end{equation}

The mathematical framework of transfer learning  \citep{cao2023feasibility} is summarized as follows.
\begin{equation}\label{eq: tl-fw}
    \begin{matrix}
        \XCal_S\ni X_S & \xRightarrow{\text{\hspace{4pt} Pretrained model } f_S^* \text{ from } \eqref{eq: obj-s}\text{\hspace{4pt}}} & f_S^*(X_S)\in\YCal_S\\
       T^X\Big\Uparrow & & \Big\Downarrow T^Y \\
        \XCal_T\ni X_T & \xdashrightarrow[\text{\hspace{10pt}}f_T^*\in\argmin_{f\in A_T} \LL_T(f_T)\text{\hspace{10pt}}]{\text{Direct learning \eqref{eq: obj-t} }} & f_T^*(X_T)\in\YCal_T
    \end{matrix}
\end{equation}
Here, the ``Direct learning''  is to directly analyze  \eqref{eq: obj-t} and solve for an optimizer $f_T^*$, with $\pbm_T=Law(f_T^*(X_T))$ as the probability distribution of its output. 
Along the alternative route of transfer learning, the optimal source model \(f_S^*\) is also referred to as a pretrained model, and we use $\pbm_S=Law(f_S^*(X_S))$ to denote the probability distribution of its output.

Moreover, \(T^X:\XCal_T\to\XCal_S\) is called 
the input transport mapping and \(T^Y:\XCal_T\times\YCal_S\to\YCal_T\) is the output transport mapping, where $\T^X$ and $\T^Y$ are proper sets of transport mappings such that 
\[T^X\subset\{f_\text{input}|f_\text{input}:\XCal_T\to\XCal_S\},\]
and
\[\left\{T^Y(\cdot, (f_S^*\circ T^X)(\cdot))|T^X\in\T^X,T^Y\in\T^Y\right\}\subset A_T.\]
In particular, when $\XCal_S=\XCal_T$ (resp. $\YCal_S=\YCal_T$), the identity mapping  $id^X(x)=x$ (resp.  $id^Y(x,y)=y$) is included in $\T^X$ (resp. $\T^Y$).
Within this mathematical framework, The transfer learning  \eqref{eq: tl-fw} with supervised setting can be  defined as follows.
\begin{defn}[Transfer learning with supervised setting]\label{def:tl}
The transfer learning procedure presented in \eqref{eq: tl-fw} is to solve the optimization problem
\begin{align}
\label{eq: doub-trans}
\min_{T^X\in\mathbb{T}^X,T^Y\in\mathbb{T}^Y}\LL_T\left(T^Y(\cdot, (f_S^*\circ T^X)(\cdot))\right)=\E\left[L_T\left(Y_T,T^Y(X_T,(f_S^*\circ T^X)(X_T))\right)\right].
\end{align}
\end{defn}
This transfer learning framework \eqref{eq: doub-trans} in Definition \ref{def:tl}  can be easily extended to an unsupervised setting, where \(X_{\cdot}\in\XCal_{\cdot}\) is available but \(Y_{\cdot}\in\YCal_{\cdot}\) is not.

\begin{defn}[Transfer learning with unsupervised setting]
     \label{rmk:tl-unsup}
     Transfer learning with unsupervised setting is defined as 
     \begin{equation}
         \label{eq: doub-trans-unsup}
         \min_{T^X\in\T^X,T^Y\in\T^Y}\LL_T(T^Y(\cdot,(f_S^*\circ T^X)(\cdot))),
     \end{equation}
     with the loss function \(\LL_{\cdot}:A_{\cdot}\to\R\) depending only on \(Law(X_{\cdot})\).
\end{defn}
In Sections \ref{sec:return_predict} and \ref{sec:finance_PO}, we will present two concrete examples of the transfer learning framework (\ref{eq: tl-fw}), one under supervised setting and the other under unsupervised setting. 

The procedure of solving the above optimization problem (either \eqref{eq: doub-trans} or \eqref{eq: doub-trans-unsup}) is often referred to as fine-tuning in the literature of transfer learning. It is to choose some initial transport mappings $T^X_0\in\T^X_0\subset\T^X$ and $T^Y_0\in\T^Y_0\subset\T^Y$ to derive an intermediate model $f_{ST}\in A_T$ with
\begin{equation}\label{eq: int-model}
    f_{ST}(x)=T^Y_0(x, (f_S^*\circ T^X_0)(x)),\quad \forall x\in\XCal_T,
\end{equation}
with the set of possible intermediate models denoted as
\begin{equation}\label{eq: int-set}
    \ICal=\left\{T^Y_0(\cdot, (f_S^*\circ T^X_0)(\cdot))\big|T^X_0\in\T^X_0, T^Y_0\in\T^Y_0\right\}.
\end{equation}
This fine-tuning procedure allows for computational efficiency in terms of {\it transfer risk}, introduced  in the next  section. 

\section{Transfer Risk}
\label{sec:transferrisk}
In parallel to transfer learning framework \eqref{eq: tl-fw}, there are two major sources of transfer risk for a fixed intermediate model $f_{ST}\in\ICal$:  the risk that measures the mismatch  between the output distributions of the intermediate model $f_{ST}$ and the optimal target model $f_T^*$, and the risk reflecting the difference between the transported target input and the source input.

\begin{defn}[Output transport risk]\label{defn:outputrisk}
Let \(\Ef^O:A_T\to\R\) be a real-valued function on the set of target models. For any \(f_{ST}\in\ICal\subset A_T\), $\Ef^O(f_{ST})$ is called an output transport risk of intermediate model $f_{ST}$ if it  satisfies
    \begin{enumerate}
        \item $\Ef^O(f_{ST})\geq 0$, i.e., transfer learning always incurs a non-negative effort;
        \item $\Ef^O(f_{ST})=0$ if and only if $\pbm_T=\pbm_{ST}$, where $\pbm_T:=Law(f_{T}(X_T))$ and $\pbm_{ST}:=Law(f_{ST}(X_T))$. That is, the output transport risk vanishes when the intermediate model $f_{ST}$ completely recovers the distribution of the optimal target task.
    \end{enumerate}
\end{defn}

Clearly, the lower this output risk, the more effective the transfer scheme with the intermediate model $f_{ST}$.

\begin{defn}[Input transfer risk]\label{defn: inputrisk}
Let \(\Ef^I:\T^X\to\R\) be a real-valued function on the set of input transport mappings. Given an import transport mapping $T^X_0\in\T^X_0\subset \T^X$, $\Ef^I(T^X_0)$ is called an input transport risk if it  satisfies
    \begin{enumerate}
        \item $\Ef^I(T^X_0)\geq 0$, i.e., transfer learning always incurs a non-negative effort;
        \item $\Ef^I(T^X_0)=0$ if and only if $T^X_0\#Law(X_T)=Law(X_S)$. 
    \end{enumerate}
\end{defn}
Evidently, the lower this input risk, the higher the similarity between the transported target input $T^X_0(X_T)$ and the source input $X_S$.



Both the input transfer risk and the output transfer risk  characterize the divergence between probability distributions, and their exact forms can be task dependent.  Nevertheless,  there is a key difference between these two forms of risks:  in the output transport risk, $\pbm_T$, the output distribution of the optimal target model, is {\it unknown}, and no prior knowledge about $f_T^*$ is assumed. 
Therefore, analyzing the output transport risk is decisively more complicated, as will be clear  in Section \ref{sec:divergence}.

We are now ready to propose the notion of  {\it transfer risk}  by considering all intermediate models in $\ICal$, in order to  measure the effectiveness of a transfer learning \eqref{eq: doub-trans} (depicted in Figure \ref{fig:risk_motivation}).
\begin{defn}[Transfer risk]\label{defn: trans-bene}
For a transfer learning procedure characterized by the 6-tuple ${(S,T,\T^X,\T^X_0,\T^Y,\T^Y_0)}$  in \eqref{eq: doub-trans}, the transfer risk of the transfer learning framework \eqref{eq: doub-trans} from source task $S$ to target task $T$ is defined as $\CCal:\R\times\R\to\R$ with $\CCal(0,0)=0$ such that  
\begin{equation}\label{eqn:risk}
\CCal(S,T)=\inf_{f_{ST}\in\ICal}\CCal(S,T|f_{ST}).
\end{equation}
Here, for a given $f_{ST}=T^Y_0(\cdot, (f_S^*\circ T^X_0)(\cdot))\in\ICal$, $\CCal(S,T|f_{ST})\ge 0 $ is called  model-specific transfer risk with the following properties:
\begin{enumerate}
    \item $\CCal(S,T|f_{ST})=C(\Ef^O(f_{ST}),\Ef^I(T^X_0))$ is non-decreasing in $\Ef^O(f_{ST})$ under any fixed $\Ef^I(T^X_0)$ and non-decreasing in $\Ef^I(T^X_0)$ under any fixed $\Ef(f_{ST})$; 
    \item $\CCal(S,T|f_{ST})$ is Lipschitz  in the sense that for any other transfer problem characterized by $(\bar S,\bar T,\bar\T^X,\bar\T_0^X,\bar\T^Y,\bar\T_0^Y)$ and one of its intermediate models $\bar f_{ST}=\bar T^Y_0(\cdot, (\bar f_S^*\circ \bar T^X_0)(\cdot))\in\bar\ICal$, there exists a constant $L>0$ such that 
    \[\begin{aligned}|\CCal(S,T|f_{ST})-\CCal(\bar S,\bar T|\bar f_{ST})|&\leq L(|\Ef^O(f_{ST})-\Ef^O(\bar f_{ST})|\\
    &+|\Ef^I(T^X_0)-\Ef^I(\bar T^X_0)|).\end{aligned}\]
\end{enumerate}
\end{defn}

Note that the Lipschitz condition in Definition \ref{defn: trans-bene} is to emphasize the dependence of transfer risk on a given transfer learning problem. This Lipschitz property is satisfied when the function $C$ in Definition \ref{defn: trans-bene} is Lipschitz continuous. 

Note also these definitions of risks involve the sets of initial transport mappings $\T^X_0$ and $\T_0^Y$, instead of the sets of all possible transport mappings $\T^X$ and $\T^Y$. These reduced sets allow for efficient evaluation of transfer risk prior to starting the full-scale transfer learning.

\begin{figure}[!ht]
    \centering
    \includegraphics[width=0.6\textwidth]{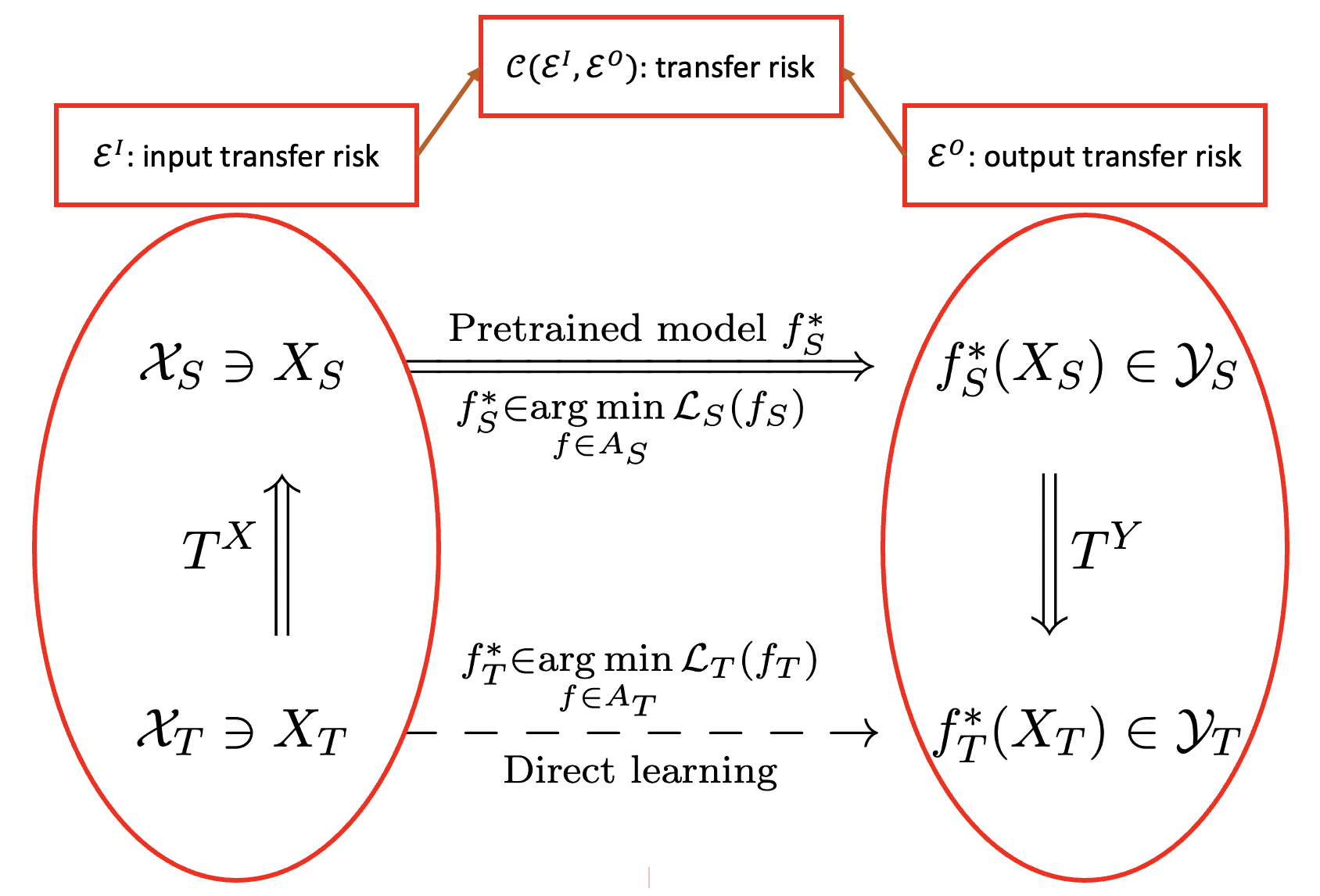}
    \caption{Parallel relation between transfer risk and transfer learning framework}
    \label{fig:risk_motivation}
\end{figure}


One simple example of the model-specific transfer risk is
\begin{equation}\label{eq: risk-ln}
    \CCal^\lambda(S,T|f_{ST})=\Ef^O(f_{ST})+\lambda \Ef^I(T^X_0),
\end{equation}
where $\lambda>0$ is a pre-specified parameter modulating the weight of the input transport in the transfer learning problem \eqref{eq: tl-fw}. 

\subsection{Properties of transfer risk}
We will show that  transfer risk is continuous in the input distribution and robust with respect to the pretrained model. 
This property is useful to measures how transfer risk is affected by the choice of source task $S$ for a fixed target task. It can also be used to exclude {\it a priori} inappropriate source tasks when compared against existing viable source tasks.

To study the continuity of the transfer risk, one needs to assign an appropriate metric for $S$. To this end, recall that  $S$ is determined by $(Law(X_S),f_S^*)$:
the probability distribution of source input $Law(X_S)$ and  the pretrained model $f_S^*$ in \eqref{eq: obj-s}. That is, for a 
target task $T$, the risk $\CCal(S,T)$ can be rewritten as
$\CCal(S,T)=\CCal(S)=\CCal(\mu,f)$ for any $S=(\mu,f)\in\SCal$, with 
$\SCal\subset\mathcal{P}(\XCal_S)\times A_S$, where \(A_S\) is the set of source models given in \eqref{eq: a-s}.

More specifically, for any $S_1,S_2\in\SCal$ such that $S_1=(\mu_1,f_1)$ and $S_2=(\mu_2,f_2)$, define the metric $d_S$ over $\SCal$ as
\begin{equation}\label{eqn:d_s}
d_S(S_1,S_2):=D(\mu_1,\mu_2)+d_M(f_1,f_2),\quad \forall\mu_1,\mu_2\in\mathcal{P}(\XCal),\ \ \forall f_1,f_2\in A_S.
\end{equation}
Here  $D:\mathcal{P}(\XCal_S)\times \mathcal{P}(\XCal_S)\to\R$ is a metric function for $\mathcal{P}(\XCal_S)$, the set of probability measures over $\XCal_S$; and the metric function \(d_M\) for \(A_S\) is defined as
\[d_M(f_1,f_2):=\min\{M,\sup_{x\in\XCal_S}\|f_1(x)-f_2(x)\|_{\YCal_S}\},\quad \forall f_1,f_2\in A_S\]
for a sufficiently large constant $M>0$.

In the following discussion on continuity, the next assumption is necessary. Assumption \ref{ass:D} ensures that the choice of input transfer risk is consistent with the metric $d_S$ in \eqref{eqn:d_s} defined between source tasks.

\begin{asp}\label{ass:D}
    For any input transport mapping $T^X_0\in\mathbb{T}^X_0$, assume the input transfer risk $\Ef^I(T^X_0)$ take the form  $\Ef^I(T^X_0):=D(T^X_0\#Law(X_T),Law(X_S))$, where $D:\mathcal{P}(\XCal_S)\times \mathcal{P}(\XCal_S)\to\R$ is the distance function appearing in \eqref{eqn:d_s}.
\end{asp}

By definition, the following degenerate case holds immediately.
\begin{prop}[Zero transfer risk]
    Suppose $\XCal_T=\XCal_S$, $\YCal_T=\YCal_S$ and the target task $T\in\SCal$. Then $\CCal(T)=0$.
\end{prop}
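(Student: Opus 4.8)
The plan is to exhibit a single intermediate model that simultaneously drives both the output and input transport risks to zero, so that the nonnegative infimum defining $\CCal(T)$ must equal $0$. Since the hypothesis $T\in\SCal$ lets us regard the target as an admissible source, I would take the source task to coincide with the target, i.e. $S=T=(Law(X_T),f_T^*)$; in particular $f_S^*=f_T^*$ and $Law(X_S)=Law(X_T)$. Because $\XCal_S=\XCal_T$ and $\YCal_S=\YCal_T$, the identity transport mappings $id^X(x)=x$ and $id^Y(x,y)=y$ are admissible, and (as is natural for the initial ``do-nothing'' fine-tuning configuration) I would take $id^X\in\T^X_0$ and $id^Y\in\T^Y_0$ so that the associated model lies in $\ICal$.

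Next I would form the intermediate model associated with this choice. By \eqref{eq: int-model} with $T^X_0=id^X$ and $T^Y_0=id^Y$,
\[
f_{ST}(x)=id^Y\bigl(x,(f_S^*\circ id^X)(x)\bigr)=f_S^*(x)=f_T^*(x),\quad \forall x\in\XCal_T,
\]
so that $f_{ST}=f_T^*\in\ICal$. I then evaluate the two component risks at this model. For the output risk, $\pbm_{ST}=Law(f_{ST}(X_T))=Law(f_T^*(X_T))=\pbm_T$, so the second property in Definition \ref{defn:outputrisk} yields $\Ef^O(f_{ST})=0$. For the input risk, $id^X\#Law(X_T)=Law(X_T)=Law(X_S)$, so the second property in Definition \ref{defn: inputrisk} yields $\Ef^I(id^X)=0$.

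Combining these, the model-specific transfer risk at $f_{ST}=f_T^*$ equals $\CCal(S,T|f_{ST})=C(\Ef^O(f_{ST}),\Ef^I(id^X))=C(0,0)=0$, using the normalization $\CCal(0,0)=0$ of Definition \ref{defn: trans-bene} (where $C$ and $\CCal$ denote the same map $\R\times\R\to\R$). Since $\CCal(S,T|f_{ST})\ge 0$ for every $f_{ST}\in\ICal$ and the infimum in \eqref{eqn:risk} is bounded above by the value at this particular model, I conclude
\[
0\le\CCal(T)=\inf_{f_{ST}\in\ICal}\CCal(S,T|f_{ST})\le\CCal(S,T|f_T^*)=0,
\]
hence $\CCal(T)=0$.

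I expect the only delicate point to be the membership $id^X\in\T^X_0$ and $id^Y\in\T^Y_0$. The framework guarantees the identity maps lie in the \emph{full} transport sets $\T^X,\T^Y$ when the spaces match, but the transfer risk is computed over the \emph{reduced} initial sets $\T^X_0,\T^Y_0$; making $f_T^*\in\ICal$ therefore hinges on the identity maps also belonging to these initial sets, i.e. the natural requirement that the untransferred model be an available fine-tuning starting point. Everything else is a direct verification through the defining properties of $\Ef^O$, $\Ef^I$, and the normalization of $C$.
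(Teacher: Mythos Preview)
Your argument is correct and is precisely the unpacking of what the paper means by ``holds immediately by definition'': with $S=T$, identity transport maps give $f_{ST}=f_T^*$, forcing $\Ef^O=\Ef^I=0$ and hence $\CCal(T)=C(0,0)=0$. Your caveat about needing $id^X\in\T^X_0$ and $id^Y\in\T^Y_0$ is well-spotted and in fact more careful than the paper, which tacitly assumes this.
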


Now, we consider source tasks $S_1,S_2\in\SCal$ that differ only in the input distribution, i.e., $S_f^1=(\mu_1,f)$ and $S_f^2=(\mu_2,f)$. Then we have the following continuity property for $\CCal$.
\begin{prop}[Continuity in input distribution]\label{lem: cont-distr}
    Assume Assumption \ref{ass:D}. Fix $f\in A_S$. $\CCal(\cdot, f)$ is continuous on $(\mathcal{P}(\XCal_S),D)$.
\end{prop}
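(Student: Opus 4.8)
The plan is to prove something slightly stronger than continuity, namely that $\CCal(\cdot,f)$ is Lipschitz on $(\mathcal{P}(\XCal_S),D)$. The starting observation is that passing from $S_f^1=(\mu_1,f)$ to $S_f^2=(\mu_2,f)$, while holding the pretrained model $f$ and the entire target task $T$ fixed, leaves the set of intermediate models $\ICal$ unchanged: every $f_{ST}=T^Y_0(\cdot,(f\circ T^X_0)(\cdot))$ is assembled from $f$ together with the initial transport mappings $T^X_0\in\T^X_0$ and $T^Y_0\in\T^Y_0$, none of which depend on the source input law $\mu$. Consequently the infimum defining $\CCal(S_f^i,T)$ in \eqref{eqn:risk} runs over the same index set $\ICal$ for $i=1,2$, so the two transfer risks can be compared model by model.

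Next I would isolate where the $\mu$-dependence actually lives. By Definition \ref{defn:outputrisk}, the output transport risk $\Ef^O(f_{ST})$ is a functional of $\pbm_T=Law(f_T^*(X_T))$ and $\pbm_{ST}=Law(f_{ST}(X_T))$; both laws are determined by $Law(X_T)$ and the fixed maps $f_T^*,f_{ST}$, so $\Ef^O(f_{ST})$ is identical for $S_f^1$ and $S_f^2$. Under Assumption \ref{ass:D}, the only dependence on $\mu$ enters through the input transfer risk $\Ef^I(T^X_0)=D(T^X_0\#Law(X_T),\mu)$. Applying the Lipschitz property of Definition \ref{defn: trans-bene} to the two transfer problems that differ only in their source input distribution, the output terms cancel and I obtain, for every $f_{ST}\in\ICal$ with associated $T^X_0$,
\[
\bigl|\CCal(S_f^1,T\,|\,f_{ST})-\CCal(S_f^2,T\,|\,f_{ST})\bigr|\le L\,\bigl|D(T^X_0\#Law(X_T),\mu_1)-D(T^X_0\#Law(X_T),\mu_2)\bigr|.
\]
The triangle inequality for the metric $D$ bounds the right-hand side by $L\,D(\mu_1,\mu_2)$, and this bound is uniform over all $f_{ST}\in\ICal$ because the transported target law $T^X_0\#Law(X_T)$ has dropped out entirely.

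Finally I would pass to the infimum. Writing $g_i(f_{ST}):=\CCal(S_f^i,T\,|\,f_{ST})$, the uniform estimate $|g_1(f_{ST})-g_2(f_{ST})|\le L\,D(\mu_1,\mu_2)$ together with the elementary inequality $|\inf g_1-\inf g_2|\le\sup|g_1-g_2|$ gives
\[
\bigl|\CCal(S_f^1,T)-\CCal(S_f^2,T)\bigr|\le L\,D(\mu_1,\mu_2),
\]
which is Lipschitz continuity, hence the claimed continuity on $(\mathcal{P}(\XCal_S),D)$. The only genuinely delicate point is the cancellation step in the second paragraph: I must confirm that the output transport risk carries no dependence on the source input distribution, so that the abstract two-variable Lipschitz estimate of Definition \ref{defn: trans-bene} collapses to a bound in $D(\mu_1,\mu_2)$ alone. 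The triangle-inequality bound and the infimum comparison are routine.
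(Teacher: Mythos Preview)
Your argument is correct and in fact slightly stronger than what the paper proves. Both you and the paper rely on the same two ingredients: the Lipschitz estimate from Definition~\ref{defn: trans-bene} and the triangle inequality for the metric $D$ from Assumption~\ref{ass:D}. The difference is organizational. The paper fixes $\epsilon>0$ and runs separate $\epsilon$--$\delta$ arguments for lower and upper semi-continuity, choosing in each case an intermediate model $f_I$ (arbitrary for the lower bound, near-optimal for the upper bound) and applying the Lipschitz and triangle inequalities to that single model. You instead observe that the model-specific bound $|\CCal(S_f^1,T\,|\,f_{ST})-\CCal(S_f^2,T\,|\,f_{ST})|\le L\,D(\mu_1,\mu_2)$ is uniform over $f_{ST}\in\ICal$, then invoke $|\inf g_1-\inf g_2|\le\sup|g_1-g_2|$ once. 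This buys you global $L$-Lipschitz continuity of $\CCal(\cdot,f)$ on $(\mathcal{P}(\XCal_S),D)$ rather than mere continuity, at no extra cost; the paper's semi-continuity split is unnecessary once the uniform bound is recognized. Your careful check that $\Ef^O(f_{ST})$ carries no $\mu$-dependence is exactly what the paper uses implicitly when it writes $\CCal(\mu,f|f_I)\le\CCal(\mu',f|f_I)+L\delta$.
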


\begin{proof}[Proof of Proposition \ref{lem: cont-distr}]
Fix an arbitrary $\epsilon>0$. Take any $\mu\in\mathcal{P}(\XCal_S)$.
We first establish the lower semi-continuity: For any $T^X_0\in\T_0^X$ and $T_0^Y\in\T_0^Y$, let $f_{I}$ denote the corresponding intermediate model from source model $f$. By Definition \ref{defn: trans-bene}, we have
\[\CCal(\mu,f)-\frac{1}{2}\epsilon<\CCal(\mu,f|f_{I}).\]
By the triangle inequality of $D$ and the Lipschitz property of $\CCal(\mu,f|f_I)$, take $\delta=\frac{\epsilon}{2L}$ for any $\mu'\in B_\delta(\mu)\subset \mathcal{P}(\XCal_S)$,
\[\CCal(\mu,f|f_{I})\leq \CCal(\mu',f|f_{I})+L\delta.\]
Notice that the choice of $\delta$ is independent of $T_0^X$ and $T_0^Y$. Therefore, 
\[\CCal(\mu,f)-\epsilon<\CCal(\mu',f;f_{I})\Rightarrow \CCal(\mu,f)-\epsilon<\CCal(\mu',f).\]

Now we show the upper semi-continuity.  By the definition  of $\CCal$, there exists $\bar T_0^X\in\T_0^X$ and $\bar T_0^Y\in\T_0^Y$, with corresponding intermediate model $\bar f_I$, such that 
\[\CCal(\mu,f|\bar f_{I})<\CCal(\mu,f)+\frac{1}{2}\epsilon.\]
Again, by the triangle inequality of $D$ and the Lipschitz property of $\CCal(\mu,f|f_I)$, take $\delta=\frac{\epsilon}{2L}$ for any $\mu'\in B_\delta(\mu)\subset \mathcal{P}(\XCal_S)$,
\[\CCal(\mu,f|\bar f_{I})\geq \CCal(\mu',f|\bar f_{I})-\delta.\]
Then we have 
\[\CCal(\mu',f)\leq \CCal(\mu',f|\bar f_{I})<\CCal(\mu,f)+\epsilon.\]

\end{proof}

This proposition shows that transfer risk will change continuously along with any modification in source input. The sensitivity of transfer risk with respect to the change in source input distribution depends on the Lipschitz constant $L$ of $\CCal$. Therefore, one can modulate this sensitivity by carefully designing the $C$ function in Definition \ref{defn: trans-bene}. For instance, for linear transfer risk $\CCal^\lambda$ in \eqref{eq: risk-ln}, the sensitivity can be controlled by varying the value of $\lambda$.

Next, consider source tasks $S_1,S_2\in\SCal$ that differ only in the pretrained model, i.e., $S_\mu^1=(\mu,f_1)$ and $S_\mu^2=(\mu,f_2)$. Then we have  the robustness of the transferability in terms of  the continuity of $\CCal(\mu,\cdot)$ in pretrained model $f\in(A_S,d_M)$. 

\begin{prop}[Continuity in pretrained model]\label{lem: cont-f} 
Assume Assumption \ref{ass:D}, and assume that there exists a constant $L>0$ such that for any $T_0^Y\in\T_0^Y$, 
\[T_0^Y(x_1,y_1)-T_0^Y(x_2,y_2)\leq L\left(\|x_1-x_2\|_{\XCal_T}+\|y_1-y_2\|_{\YCal_S}\right),\]
for all $(x_1,y_1),(x_2,y_2)\in\XCal_T\times\YCal_S$.
Assume also that there exist  some $L'>0$ and $p\geq1$ such that the output transfer risk satisfies
    \[\left|\Ef^O(h_1)-\Ef^O(h_2)\right|\leq L'\mathcal{W}_p(h_1\#Law(X_T),h_2\#Law(X_T))^p\]
    for all $h_1,h_2\in\mathcal I$. 
    Then $\CCal(\mu,\cdot)$ is continuous on $(A_S,d_M)$ for any fixed $\mu\in\mathcal{P}(\XCal_S)$.
\end{prop}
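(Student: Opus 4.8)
The plan is to mirror the structure of the proof of Proposition \ref{lem: cont-distr}, but to move the perturbation from the input measure to the pretrained model through the chain of Lipschitz hypotheses. Fix $\mu$ and $f\in A_S$, and consider a nearby source model $f'\in A_S$ with $d_M(f,f')$ small; since only the regime $f'\to f$ matters, I may assume $d_M(f,f')<M$, so that $d_M(f,f')=\sup_{z\in\XCal_S}\|f(z)-f'(z)\|_{\YCal_S}$ and the truncation in $d_M$ is inactive. The goal is to produce a single modulus-of-continuity estimate of the form $|\CCal(\mu,f)-\CCal(\mu,f')|\leq K\,d_M(f,f')^p$ that is uniform over all initial transport pairs, from which continuity follows at once upon passing to infima.

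First I would fix an arbitrary pair $(T_0^X,T_0^Y)\in\T_0^X\times\T_0^Y$ and compare the two intermediate models built from the same pair, $f_{ST}(x)=T_0^Y(x,(f\circ T_0^X)(x))$ and $f'_{ST}(x)=T_0^Y(x,(f'\circ T_0^X)(x))$. Applying the Lipschitz hypothesis on $T_0^Y$ with $x_1=x_2=x$ collapses the input term and yields the pointwise bound $\|f_{ST}(x)-f'_{ST}(x)\|_{\YCal_T}\leq L\,\|f(T_0^X(x))-f'(T_0^X(x))\|_{\YCal_S}\leq L\,d_M(f,f')$, hence $\sup_x\|f_{ST}(x)-f'_{ST}(x)\|_{\YCal_T}\leq L\,d_M(f,f')$. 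The crucial link between this sup-norm model bound and the distributional quantity in the $\Ef^O$-hypothesis is the synchronous coupling: pushing $X_T$ through both maps simultaneously is an admissible coupling of $f_{ST}\#Law(X_T)$ and $f'_{ST}\#Law(X_T)$, so $\mathcal{W}_p(f_{ST}\#Law(X_T),f'_{ST}\#Law(X_T))^p\leq\E[\|f_{ST}(X_T)-f'_{ST}(X_T)\|_{\YCal_T}^p]\leq (L\,d_M(f,f'))^p$. Feeding this into the assumed bound on $\Ef^O$ gives $|\Ef^O(f_{ST})-\Ef^O(f'_{ST})|\leq L'(L\,d_M(f,f'))^p$.

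At the same time, Assumption \ref{ass:D} makes the input transfer risk $\Ef^I(T_0^X)=D(T_0^X\#Law(X_T),\mu)$ depend only on $\mu$ and $T_0^X$, so it is identical for the two source tasks and the input coordinate of $C$ does not move. Invoking the Lipschitz property of $C$ from Definition \ref{defn: trans-bene}, with Lipschitz constant $L_C$ (distinct from the $L$ of the $T_0^Y$-hypothesis), then controls the model-specific transfer risk: $|\CCal(\mu,f|f_{ST})-\CCal(\mu,f'|f'_{ST})|\leq L_C\,L'(L\,d_M(f,f'))^p=:K\,d_M(f,f')^p$, with $K=L_C L' L^p$ independent of the chosen pair $(T_0^X,T_0^Y)$. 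Finally, since the same index set parametrizes the intermediate models for $f$ and for $f'$, I would pass to the infimum using the elementary fact that two families of functions within uniform distance $\eta$ have infima within distance $\eta$; this yields $|\CCal(\mu,f)-\CCal(\mu,f')|\leq K\,d_M(f,f')^p\to 0$ as $d_M(f,f')\to 0$, establishing continuity (indeed local H\"older continuity of order $p$).

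The main obstacle, and the step doing the real work, is the passage from sup-norm proximity of the two pretrained models to proximity of the output laws in $\mathcal{W}_p$: this is where the $T_0^Y$-Lipschitz hypothesis and the synchronous coupling must be combined, and where one must check that the $\min\{M,\cdot\}$ truncation in $d_M$ is harmless in the small-perturbation regime. A secondary point requiring care is verifying that the constants $L$, $L'$, and $L_C$ are genuinely uniform across $(T_0^X,T_0^Y)$, so that the per-model estimate survives the infimum.
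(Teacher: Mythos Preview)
Your proposal is correct and follows essentially the same route as the paper's proof: bound $|\Ef^O(f_{ST})-\Ef^O(f'_{ST})|$ via the $\mathcal{W}_p$-hypothesis, control $\mathcal{W}_p$ through the Lipschitz assumption on $T_0^Y$ and a coupling of the two output laws, then invoke the Lipschitz property of $C$ and pass to the infimum exactly as in Proposition~\ref{lem: cont-distr}. Your direct synchronous coupling $(f_{ST}(X_T),f'_{ST}(X_T))$ is in fact slightly tidier than the paper's version, which couples over $\Pi(Law(X_T),Law(X_T))$ with possibly distinct $x,y$ and then uses $\mathcal{W}_p(Law(X_T),Law(X_T))=0$, thereby picking up an unnecessary factor $2^{p-1}$ in the final constant.
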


\begin{proof}[Proof of Proposition \ref{lem: cont-f}]
Take any $T_0^X\in\T_0^X$ and $T_0^Y\in\T_0^Y$. For any $f_1,f_2\in(A_S,d_M)$, denote their corresponding intermediate model as $f_I^1$ and $f_I^2$, respectively. Then
\[\begin{aligned}
|\Ef^O(f_I^1)-\Ef^O(f_I^2)|&\leq L'W_p(f_I^1\#Law(X_T),f_I^2\#Law(X_T))^p\\
&=L'\inf_{\pi\in\Pi(f_I^1\#Law(X_T),f_I^2\#Law(X_T))}\int_{\YCal_T\times\YCal_T}\|x-y\|_{\YCal_T}^p\pi(dx,dy)\\
&\leq L'\inf_{\gamma\in\Pi(Law(X_T),Law(X_T))}\int_{\XCal_T\times\XCal_T}\|T_0^Y(x,f_1(T_0^X(x)))-T_0^Y(y,f_2(T_0^X(y)))\|_{\YCal_T}^2\pi(dx,dy)\\
&\leq 2^{p-1}L^pL'\left[\inf_{\gamma\in\Pi(Law(X_T),Law(X_T))}\int_{\XCal_T\times\XCal_T}\|x-y\|_{\XCal_T}^pd\pi(dx,dy)+d_M(f_1,f_2)^p\right]\\
&=2^{p-1}L^pL'\left[W_p(Law(X_T),Law(X_T))^p+d_M(f_1,f_2)^p\right]=2^{p-1}L^pd_M(f_1,f_2)^p.
\end{aligned}\]
The rest of the proof is similar to that of Proposition \ref{lem: cont-distr}.
\end{proof}

This proposition shows that transfer risk will change continuously along with the modification in the pretrained model. The sensitivity of transfer risk with respect to the change in pretrained model is determined by three factors: (1) the Lipschitz constant inherited from the $C$ function in Definition \ref{defn: trans-bene}, (2) the choice of output transport risk $\Ef^O$, and (3) the family of output transport mappings $\T_0^Y$. In practice, one may control the sensitivity of the transfer risk through careful choices of those quantities.


Propositions \ref{lem: cont-distr} and \ref{lem: cont-f} lead to the following results.
\begin{prop}
    \label{thm: cont}
    Given the conditions in Proposition \ref{lem: cont-f}. Then the transfer risk $\CCal$ as in Definition \ref{defn: trans-bene} is continuous on $(\SCal, d_S)$.
\end{prop}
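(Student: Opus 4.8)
The plan is to prove Proposition \ref{thm: cont} by combining the two preceding propositions, exploiting the product structure of the metric $d_S$ in \eqref{eqn:d_s}. Since $d_S(S_1,S_2) = D(\mu_1,\mu_2) + d_M(f_1,f_2)$ decomposes additively into an input-distribution component and a pretrained-model component, the strategy is to control the variation of $\CCal$ along each coordinate separately and then glue the estimates via the triangle inequality. Concretely, for $S_1 = (\mu_1, f_1)$ and $S_2 = (\mu_2, f_2)$, I would introduce the intermediate task $\tilde S = (\mu_2, f_1)$ and write
\[
|\CCal(S_1) - \CCal(S_2)| \le |\CCal(\mu_1, f_1) - \CCal(\mu_2, f_1)| + |\CCal(\mu_2, f_1) - \CCal(\mu_2, f_2)|.
\]
The first term is controlled by Proposition \ref{lem: cont-distr} (continuity in the input distribution with $f = f_1$ fixed), and the second by Proposition \ref{lem: cont-f} (continuity in the pretrained model with $\mu = \mu_2$ fixed).

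First I would fix $\epsilon > 0$ and a base point $S_0 = (\mu_0, f_0) \in \SCal$. From Proposition \ref{lem: cont-distr}, continuity of $\CCal(\cdot, f_0)$ at $\mu_0$ yields a $\delta_1 > 0$ such that $D(\mu, \mu_0) < \delta_1$ implies $|\CCal(\mu, f_0) - \CCal(\mu_0, f_0)| < \epsilon/2$. The subtlety here is that for the splitting above I actually need continuity in the distribution coordinate to hold uniformly enough in the model coordinate, or else I must be careful about the order in which I freeze variables; the clean way is to freeze the model at the target value $f_1$ for the distribution step and freeze the distribution at $\mu_2$ for the model step, as in the decomposition above, so that each single-coordinate proposition applies verbatim without needing a uniform-in-the-other-variable statement. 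From Proposition \ref{lem: cont-f}, continuity of $\CCal(\mu_2, \cdot)$ on $(A_S, d_M)$ gives a $\delta_2 > 0$ with $d_M(f, f_0) < \delta_2 \Rightarrow |\CCal(\mu_2, f) - \CCal(\mu_2, f_0)| < \epsilon/2$.

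Next I would take $\delta = \min\{\delta_1, \delta_2\}$ and observe that $d_S(S, S_0) < \delta$ forces both $D(\mu, \mu_0) < \delta \le \delta_1$ and $d_M(f, f_0) < \delta \le \delta_2$, since each summand in $d_S$ is bounded by the total. Applying the two single-coordinate estimates through the intermediate point and summing gives $|\CCal(S) - \CCal(S_0)| < \epsilon$, establishing joint continuity at $S_0$; since $S_0$ was arbitrary, $\CCal$ is continuous on $(\SCal, d_S)$.

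I expect the main obstacle to be a bookkeeping issue rather than a deep one: the two propositions each hold with one coordinate frozen, so the decomposition must route through a hybrid task (here $(\mu_2, f_1)$) whose well-definedness in $\SCal$ should be checked, and one must confirm that the continuity modulus from Proposition \ref{lem: cont-f} does not secretly depend on the frozen distribution in a way that breaks when that distribution is $\mu_2$ rather than a fixed reference measure. Inspecting the hypotheses of Proposition \ref{lem: cont-f}, the Lipschitz constants $L$, $L'$ and the exponent $p$ are uniform in $\mu$, so the modulus is indeed uniform and the argument closes cleanly; the proof is therefore a short gluing argument once this uniformity is noted.
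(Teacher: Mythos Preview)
Your proposal is correct and matches the paper's approach: the paper simply states that Proposition \ref{thm: cont} follows from Propositions \ref{lem: cont-distr} and \ref{lem: cont-f} without spelling out any details, and your triangle-inequality splitting through the hybrid task $(\mu_2,f_1)$ is precisely the standard way to make that deduction explicit. Your observation that the modulus of continuity in Proposition \ref{lem: cont-f} is uniform in $\mu$ (since the bound $2^{p-1}L^pL'\,d_M(f_1,f_2)^p$ on the output-risk difference does not involve $\mu$) is the key point that makes the gluing work, and it is exactly what the paper is tacitly relying on.
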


Propositions \ref{lem: cont-distr} -- \ref{thm: cont} reveals that under a given target task, transfer risk is continuously influenced by both the changes in the source input and the pretrained model. Therefore, transfer risk is to evaluate the suitability of performing transfer learning and the  appropriate choice of given source tasks for a  target task.

\subsection{Transfer Risk under KL-Divergence and Wasserstein Distance}
\label{sec:divergence}
It is clear that  different learning tasks may require  different choices of divergence functions for assessment of transfer risk. In this section, we will focus on transfer risk under two types of divergence functions, namely, KL-divergence and Wasserstein distance. We will present their respective properties and their relation.

\paragraph{KL-based output transport risk.} For learning tasks such as the classification problem, one may use  cross-entropy as the loss function. 

Specifically, let $\pbm_T=\tilde\pbm_T+\pbm_0$ be its unique Lebesgue decomposition, i.e., for any measurable set $B\subset\YCal_T$, there exists some  function $h_{ST}:\YCal_T\to\R^+$ such that
$\tilde\pbm_T(B)=\int_Bh_{ST}d\pbm_{ST}$, with $\pbm_0$ singular with respect to $\pbm_{ST}$. Then the KL-based output risk can be  defined as
    \[
    \Ef^O_{KL}(f_{ST}):=D_{KL}(\tilde\pbm_T\|\pbm_{ST})+{H}(\pbm_0),
    \]
where $H(\pbm_0)$ is the entropy function of $\pbm_0$. 

\begin{prop}\label{lem: kl-heur}
    For a classification problem over $K\in\mathbb{N}$ classes with cross entropy as the training loss,  for any $f_{ST}\in\mathcal{I}$, 
    \begin{eqnarray*}
    \sum_{i=1}^K\log p_{ST}(i) \leq H(\pbm_T,\pbm_{ST})-H(Law(Y_T),\pbm_{ST}) \leq -\sum_{i=1}^K\log p_{ST}(i),
    \end{eqnarray*}
    where $p_{ST}$ denotes the probability mass function for $\pbm_{ST}$.
\end{prop}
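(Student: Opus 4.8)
The plan is to reduce the two-sided bound to a termwise sign analysis. First I would recall that for a $K$-class classification problem the relevant laws $\pbm_T$, $Law(Y_T)$ and $\pbm_{ST}$ are all discrete distributions supported on $\{1,\dots,K\}$, and that the cross entropy of two such distributions $P,Q$ is $H(P,Q)=-\sum_{i=1}^K P(i)\log Q(i)$, while $H(\cdot)$ with one argument is the Shannon entropy used earlier. Writing $p_{ST}(i)=\pbm_{ST}(i)$ for the probability mass function of the intermediate model's output, I would expand the central quantity as
\[
H(\pbm_T,\pbm_{ST})-H(Law(Y_T),\pbm_{ST})=\sum_{i=1}^K\big(Law(Y_T)(i)-\pbm_T(i)\big)\log p_{ST}(i).
\]
The purpose of this step is that the two cross entropies share the same second argument $\pbm_{ST}$, so their difference collapses into a single sum weighted by $\log p_{ST}(i)$, with the $\pbm_{ST}$-entropy contribution cancelling.

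Next I would carry out the termwise bound. Since $p_{ST}(i)\in[0,1]$, each factor satisfies $\log p_{ST}(i)\le 0$; and since both $Law(Y_T)$ and $\pbm_T$ are probability distributions, the weight $c_i:=Law(Y_T)(i)-\pbm_T(i)$ lies in $[-1,1]$. For a nonpositive number $a$ and any $c\in[-1,1]$ one has $a\le ca\le -a$, since $ca-a=a(c-1)\ge 0$ and $-a-ca=-a(1+c)\ge 0$. Applying this with $a=\log p_{ST}(i)$ and $c=c_i$ gives, for each $i$,
\[
\log p_{ST}(i)\ \le\ \big(Law(Y_T)(i)-\pbm_T(i)\big)\log p_{ST}(i)\ \le\ -\log p_{ST}(i).
\]
Summing over $i=1,\dots,K$ and substituting the identity from the first step then yields the claimed double inequality.

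I do not expect a serious obstacle, as the argument is essentially the sign bookkeeping above. The two points requiring care are (i) justifying the collapse of the difference of cross entropies, which hinges on the shared second argument $\pbm_{ST}$, and (ii) handling the degenerate case $p_{ST}(i)=0$ for some class $i$, where $\log p_{ST}(i)=-\infty$ and both the lower and upper bounds blow up consistently; here one should either restrict attention to intermediate models $f_{ST}$ whose output law has full support on the $K$ classes, or adopt the standard convention discarding such terms. Beyond this, the estimate is direct and requires no further machinery.
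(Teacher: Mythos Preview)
Your argument is correct: the difference of the two cross entropies with common second argument $\pbm_{ST}$ collapses to $\sum_i\big(Law(Y_T)(i)-\pbm_T(i)\big)\log p_{ST}(i)$, and the termwise sign bound $a\le ca\le -a$ for $a\le 0$, $c\in[-1,1]$ finishes it. The paper itself does not supply a proof of this proposition, so there is nothing to compare against; your write-up is exactly the elementary verification one would expect, and your remark on the degenerate case $p_{ST}(i)=0$ is an appropriate caveat.
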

Note that $H(Law(Y_T),\pbm_{ST})$ is indeed the cross-entropy loss for the classifier $f_{ST}$. Therefore, in actual training, one may use $H(Law(Y_T),\pbm_{ST})\pm\sum_{i=1}^K\log p_{ST}(i)$ to replace   $\Ef^O_{KL}(f_{ST})$.

It is well-known that KL divergence belongs to the class of $f$ divergence and therefore it inherits properties of $f$  divergence. 
To emphasize the impact of source task over transfer risk under a given target task $T$, we denote the KL-based transfer risk as follows
\[\CCal_{KL}(S|T_0^X,T_0^Y)=\CCal_{KL}(S,T|f_I)=C\left(\Ef^I_{KL}(T_0^X),\Ef^O_{KL}(f_I)\right)\]
for any $S=(\mu,f)\in\SCal$, $T_0^X\in\T^X_0$,  $T_0^Y\in\T_0^Y$, $f_I(\cdot)=T_0^Y(\cdot,f\circ T_0^X(\cdot))\in\ICal$, and 
\[\CCal_{KL}(S)=\CCal_{KL}(S,T)=\inf_{f_I\in\ICal}\CCal_{KL}(S,T|f_I).\]
\begin{prop}[Convexity]
    Fix $T_0^X\in\T_0^X$ and $T_0^Y\in\T_0^Y$, and assume the following conditions:
    \begin{enumerate}
        \item The container function $C$ in definition \ref{defn: trans-bene} is jointly convex;
        \item $T^Y_0$ is linear in the second argument, i.e., for any constants $k_1,k_2\in\R$,
        \[T^Y_0(x,k_1y_1+k_2y_2)=k_1T_0^Y(x,y_1)+k_2T^Y_0(x,y_2),\quad \forall x\in\XCal_T\,y1,y_2\in\YCal_S.\]
    \end{enumerate}
    Then transfer risk $\CCal_{KL}(\cdot|T_0^X,T_0^Y)$ is a convex function in $S\in\SCal$.
\end{prop}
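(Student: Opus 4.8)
The plan is to reduce the joint statement to the convexity of each of the two arguments fed into the container function $C$, and then to handle the input and output risks separately, the latter being where the real work lies. Throughout I write a convex combination of source tasks as $S_\lambda := \lambda S_1 + (1-\lambda) S_2 = (\lambda \mu_1 + (1-\lambda)\mu_2,\ \lambda f_1 + (1-\lambda) f_2)$, where $\mu_i = Law(X_{S_i})$.

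First I would record the only general fact I need about $C$: besides being jointly convex, it is non-decreasing in each coordinate, which is built into the model-specific transfer risk in Definition~\ref{defn: trans-bene}. Combined with joint convexity, this gives the standard composition rule: if $g_1, g_2 : \SCal \to \R$ are convex, then $S \mapsto C(g_1(S), g_2(S))$ is convex, since $C\big(g_1(S_\lambda), g_2(S_\lambda)\big) \le C\big(\lambda g_1(S_1) + (1-\lambda) g_1(S_2),\ \lambda g_2(S_1) + (1-\lambda) g_2(S_2)\big) \le \lambda C(g_1(S_1), g_2(S_1)) + (1-\lambda) C(g_1(S_2), g_2(S_2))$, the first inequality using coordinatewise monotonicity and the second using joint convexity. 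So it suffices to show that $S \mapsto \Ef^I_{KL}(T_0^X)$ and $S \mapsto \Ef^O_{KL}(f_I)$ are each convex on $\SCal$.

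The input risk is the easy half. Under Assumption~\ref{ass:D} with $D = D_{KL}$, we have $\Ef^I_{KL}(T_0^X) = D_{KL}\big(T_0^X \# Law(X_T)\,\|\,\mu\big)$, in which $T_0^X$ and $Law(X_T)$ are fixed, so the first argument is a fixed measure and only the second argument $\mu = Law(X_S)$ varies with $S$. Convexity in $\mu$ is then immediate from the joint convexity of the KL divergence (specialized to the second slot), and since this term does not involve $f$, it is convex on all of $\SCal$. The output risk is where I expect the obstacle. The linearity of $T_0^Y$ in its second argument makes the intermediate model affine in $f$: writing $f_I^i(x) = T_0^Y(x, f_i(T_0^X(x)))$, linearity gives $f_I^\lambda(x) = T_0^Y(x, \lambda f_1(T_0^X(x)) + (1-\lambda) f_2(T_0^X(x))) = \lambda f_I^1(x) + (1-\lambda) f_I^2(x)$. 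I would then want to lift this model-level affineness to the output law, concluding $\pbm_{ST}^\lambda = \lambda \pbm_{ST}^1 + (1-\lambda)\pbm_{ST}^2$; granting that, with $\pbm_T$ fixed, joint convexity of KL applied to the pair $(\pbm_T, \pbm_{ST})$ --- together with the observation that each $\pbm_{ST}^i \ll \pbm_{ST}^\lambda$ for $\lambda \in (0,1)$, which controls the Lebesgue decomposition $\pbm_T = \tilde\pbm_T + \pbm_0$ and the entropy term $H(\pbm_0)$ --- yields convexity of $S \mapsto \Ef^O_{KL}(f_I)$ (which again depends only on $f$). The composition rule then combines the two halves.

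The hard part is precisely the lift from $f_I^\lambda = \lambda f_I^1 + (1-\lambda) f_I^2$ to $\pbm_{ST}^\lambda = \lambda \pbm_{ST}^1 + (1-\lambda)\pbm_{ST}^2$. A pushforward is not linear in the underlying map --- $Law(\lambda f_I^1(X_T) + (1-\lambda) f_I^2(X_T))$ is in general not the mixture $\lambda\,Law(f_I^1(X_T)) + (1-\lambda)\,Law(f_I^2(X_T))$ --- so this step cannot be taken for granted and is exactly what the linearity hypothesis must be made to buy (for instance by reading the convex combination in $A_S$ as a mixture at the level of the induced output distributions rather than a pointwise combination of deterministic maps). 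I would isolate this as a lemma and expect the bulk of the care to go there; the accompanying bookkeeping for the Lebesgue decomposition and the entropy term under a mixture is then routine once affineness of $\pbm_{ST}$ is secured.
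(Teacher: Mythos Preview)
Your approach is exactly the paper's: reduce via the monotone--jointly-convex composition rule to separate convexity of the input and output risks, dispatch the input side by convexity of $D_{KL}(\cdot\|\mu)$ in $\mu$, and for the output side use linearity of $T_0^Y$ to obtain $f_I^\alpha = \alpha f_I^1 + (1-\alpha) f_I^2$ pointwise, then appeal to joint convexity of KL. The paper's proof does precisely this, invoking ``the joint convexity of $f$ divergence, the linearity of $T_0^Y$, and the monotonicity of $C$'' in one breath to obtain $\Ef^O_{KL}(f_{I,\alpha}) \le \alpha\,\Ef^O_{KL}(f_{I,1}) + (1-\alpha)\,\Ef^O_{KL}(f_{I,2})$ without further argument.

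You have correctly isolated the crux the paper glosses over: passing from affineness of the \emph{map} $f_I^\alpha$ to affineness of the \emph{law} $\pbm_{ST}^\alpha$. Your hesitation is well founded --- that lift is false in general. Take $X_T\sim N(0,1)$, $f_I^1(x)=x$, $f_I^2(x)=-x$: then $f_I^{1/2}\equiv 0$ gives $\pbm_{ST}^{1/2}=\delta_0$, whereas $\tfrac12\pbm_{ST}^1+\tfrac12\pbm_{ST}^2=N(0,1)$, and $D_{KL}(\pbm_T\|\delta_0)=+\infty$ for any nondegenerate $\pbm_T$. So the paper's proof carries exactly the gap you flagged; it does not supply the missing lemma. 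Your suggested escape route --- reading the convex structure on $A_S$ as a mixture at the level of induced output distributions rather than a pointwise combination of deterministic maps --- is essentially the only way to make the statement go through, and the paper neither states nor proves such an interpretation. In short: your skeleton matches the paper's, and your diagnosis of the hard part is sharper than the paper's own treatment.
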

\begin{proof}
    Fix any $S_i=(\mu_i,f_i)\in\SCal$, $i=1,2$, and $\alpha\in[0,1]$ such that $S_\alpha=(\mu_\alpha,f_\alpha)\in\SCal$, where $\mu_\alpha=\alpha\mu_1+(1-\alpha)\mu_2$ and $f_\alpha=\alpha f_1+(1-\alpha)f_2$. The corresponding intermediate model for $S_u$ is given by $f_{I,u}(\cdot)=T_0^Y(\cdot,f_u\circ T_0^X(\cdot))\in\ICal$, $u=1,2,\alpha$. We have 
    \[\begin{aligned}
        \CCal_{KL}(S_\alpha|f_{I,\alpha})&=C\left(\Ef^{I,\alpha}_{KL}(T_0^X),\Ef^O_{KL}(f_{I,\alpha})\right)\\
        &\overset{(a)}{\leq}C\left(\alpha\Ef^{I,1}(T_0^X)+(1-\alpha)\Ef^{I,2}(T_0^X),\alpha\Ef^{O}(f_{I,1})+(1-\alpha)\Ef^O(f_{2})\right)\\
        &\overset{(b)}{\leq}\alpha\CCal_{KL}(S_1|f_{I,1})+(1-\alpha)\CCal_{KL}(S_2|f_{I,2}),
    \end{aligned}\]
    where $\Ef^{I,u}(T_0^X)=D_{KL}(Law(T_0^X(X_T))\|\mu_u)$ for $u=1,2,\alpha$. Here, (a) is due the joint convexity of $f$ divergence, the linearity of $T_0^Y$, and the monotonicity of $C$, and (b) is due the joint convexity of $C$.
\end{proof}

\paragraph{Wasserstein-based output transport risk.}
For learning problems such as GANs or supervised learning with domain adaption, Wasserstein and related distances are popular choices to measure the distance between the generative distribution and the target distribution. Therefore, a Wassertein-based output risk is a natural choice  for  such learning targets. 

More specifically, for $p\geq1$, let $\mathcal{P}_p(\YCal_T)$ be the set of probability measures over $\YCal_T$ such that
\[\int_{\R^{d_{O,T}}}\|x\|_{\YCal_T}^pd\mu(x)<\infty,\quad\forall\mu\in\mathcal{P}_p(\YCal_T).\]
The Wasserstein-based output risk is defined as
\begin{align}\label{eqn:W_risk}
\Ef^O_{W}(f_{ST}):=\mathcal{W}_p(\pbm_{ST},\pbm_{T})^p:=\inf_{\gamma\in\Pi(\pbm_{ST},\pbm_{T})}\int_{\R^{d_{O,T}}\times\R^{d_{O,T}}}\|x-y\|_{\YCal_T}^pd\gamma(dx,dy),
\end{align}
for some suitable choice of $p\geq1$, where $\Pi(\pbm_{ST},\pbm_{T})$ denotes the set of couplings of probability measures $\pbm_{ST}$ and $\pbm_{T}$.

Analogy to Proposition \ref{lem: kl-heur} is the following property for $\Ef^O_W(f_{ST})$, based on the triangle inequality of the Wasserstein distance.
\begin{prop}\label{lem: w-heur}
    The Wasserstein-based output risk $\Ef^O_W$ in (\ref{eqn:W_risk}) is upper bounded in the following sense:
    \[\Ef^O_W(f_{ST})\leq 2^{p-1}[\mathcal{W}_p(\pbm_{ST},Law(Y_T))^p+\mathcal{W}_p(\pbm_{T},Law(Y_T))^p].\]
\end{prop}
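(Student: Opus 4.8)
The plan is to reduce the claim to two standard facts: the triangle inequality for the Wasserstein-$p$ distance, and the elementary power-mean inequality $(a+b)^p\le 2^{p-1}(a^p+b^p)$ valid for $a,b\ge 0$ and $p\ge 1$. The role of $Law(Y_T)$ here is simply that of an intermediate measure inserted between $\pbm_{ST}$ and $\pbm_T$, so the whole argument is a metric-space estimate dressed up in probabilistic notation.

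First I would invoke the fact that $\mathcal{W}_p$ is a genuine metric on $\mathcal{P}_p(\YCal_T)$ — the triangle inequality being a consequence of the gluing lemma for couplings — to write
\[\mathcal{W}_p(\pbm_{ST},\pbm_T)\le \mathcal{W}_p(\pbm_{ST},Law(Y_T))+\mathcal{W}_p(Law(Y_T),\pbm_T).\]
Since $t\mapsto t^p$ is nondecreasing on $[0,\infty)$ for $p\ge 1$, raising both sides to the $p$-th power preserves the inequality, giving an upper bound by $\bigl(\mathcal{W}_p(\pbm_{ST},Law(Y_T))+\mathcal{W}_p(Law(Y_T),\pbm_T)\bigr)^p$. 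I would then apply the convexity bound $(a+b)^p\le 2^{p-1}(a^p+b^p)$ — itself Jensen's inequality for the convex map $t\mapsto t^p$ evaluated at $\tfrac12(a+b)$ — with $a=\mathcal{W}_p(\pbm_{ST},Law(Y_T))$ and $b=\mathcal{W}_p(Law(Y_T),\pbm_T)$. Recalling the definition $\Ef^O_W(f_{ST})=\mathcal{W}_p(\pbm_{ST},\pbm_T)^p$ from \eqref{eqn:W_risk} and the symmetry $\mathcal{W}_p(Law(Y_T),\pbm_T)=\mathcal{W}_p(\pbm_T,Law(Y_T))$, this is exactly the asserted bound.

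There is no genuine obstacle here; the statement is a direct corollary of the metric structure of $\mathcal{W}_p$. The only points deserving a word of care are the \emph{well-definedness} of the three distances — namely checking that $\pbm_{ST}$, $\pbm_T$, and $Law(Y_T)$ all belong to $\mathcal{P}_p(\YCal_T)$, so that each $\mathcal{W}_p$ is finite and the triangle inequality legitimately applies — and the choice to use the sharp convexity inequality to obtain the factor $2^{p-1}$ rather than the cruder bound $(a+b)^p\le 2^p\max(a,b)^p$. I would state the two ingredients as recalled facts and carry out the substitution in one display, since no further structure of the transfer-learning setup is needed.
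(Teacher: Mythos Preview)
Your proof is correct and matches the paper's own approach: the paper merely states that the bound follows from the triangle inequality for the Wasserstein distance, and your argument---triangle inequality followed by the convexity estimate $(a+b)^p\le 2^{p-1}(a^p+b^p)$---is exactly the intended derivation.
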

 
Now, consider any intermediate model $f_{ST}$, then Talagrand's inequality 
\citep{talagrand1996transportation} gives 
$$\Ef^I_W(T_0^X)\leq 2\Ef^I_{KL}(T_0^Y),\Ef^O_W(f_{ST})\leq 2\Ef^O_{KL}(f_{ST}).$$
In particular, the linear transfer risk defined in \eqref{eq: risk-ln} satisfies
\begin{align}\label{eqn:risk-ln-KLW}
    \CCal^\lambda_{W}(S,T|f_{ST}):=\Ef^O_W(f_{ST})+\lambda\cdot\Ef^I_W(T_0^X)\leq 2\CCal^\lambda_{KL}(S,T|f_{ST}):=2(\Ef^O_{KL}(f_{ST})+\lambda\cdot\Ef^I_{KL}(T_0^X)).
\end{align}
Such a relation between KL- and Wasserstein-based linear transfer risks \eqref{eqn:risk-ln-KLW} gives the following proposition.
\begin{prop}\label{prop: talagrand}
Consider transfer risk in linear form as in \eqref{eqn:risk-ln-KLW}. Suppose $\YCal_T$ is a finite-dimensional Euclidean space and $\pbm_T\ll\pbm_{ST}$. Then for a given transfer learning problem $(S,T,\T_X,\T_Y,\T_X^0,\T_Y^0)$, 
\[\CCal_W(S,T)\leq 2\CCal_{KL}(S,T).\]
\end{prop}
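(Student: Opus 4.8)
The plan is to deduce the infimum-level inequality $\CCal_W(S,T)\leq 2\,\CCal_{KL}(S,T)$ directly from the model-specific comparison \eqref{eqn:risk-ln-KLW} by a monotonicity-of-infimum argument. The two hypotheses in the statement do not enter the lifting step itself; rather, they serve to guarantee that \eqref{eqn:risk-ln-KLW} holds in the first place. Specifically, the assumption that $\YCal_T$ is a finite-dimensional Euclidean space together with $\pbm_T\ll\pbm_{ST}$ is exactly what makes Talagrand's transportation inequality applicable to the output transport risk, yielding $\Ef^O_W(f_{ST})\leq 2\,\Ef^O_{KL}(f_{ST})$; combined with the analogous input bound $\Ef^I_W(T_0^X)\leq 2\,\Ef^I_{KL}(T_0^X)$ and the shared weight $\lambda>0$, this gives the model-specific bound
\[\CCal^\lambda_W(S,T|f_{ST})=\Ef^O_W(f_{ST})+\lambda\,\Ef^I_W(T_0^X)\leq 2\bigl(\Ef^O_{KL}(f_{ST})+\lambda\,\Ef^I_{KL}(T_0^X)\bigr)=2\,\CCal^\lambda_{KL}(S,T|f_{ST})\]
for each intermediate model $f_{ST}\in\ICal$, which is precisely \eqref{eqn:risk-ln-KLW}.

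To carry out the lifting, I would recall from \eqref{eqn:risk} that, in the linear form, $\CCal_W(S,T)=\inf_{f_{ST}\in\ICal}\CCal^\lambda_W(S,T|f_{ST})$ and $\CCal_{KL}(S,T)=\inf_{f_{ST}\in\ICal}\CCal^\lambda_{KL}(S,T|f_{ST})$. Fix an arbitrary $f_{ST}\in\ICal$. Since the infimum is no larger than any particular value, and then applying the pointwise bound above,
\[\CCal_W(S,T)\leq\CCal^\lambda_W(S,T|f_{ST})\leq 2\,\CCal^\lambda_{KL}(S,T|f_{ST}).\]
The left-hand side is a fixed number independent of $f_{ST}$, and the inequality holds for every $f_{ST}\in\ICal$; taking the infimum over $f_{ST}$ on the right-hand side and using that $2$ is a positive constant with the infimum order-preserving, so that $\inf_{f_{ST}}2\,\CCal^\lambda_{KL}(S,T|f_{ST})=2\,\CCal_{KL}(S,T)$, delivers $\CCal_W(S,T)\leq 2\,\CCal_{KL}(S,T)$.

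The genuinely routine part is this infimum manipulation; the step that deserves attention is securing \eqref{eqn:risk-ln-KLW} \emph{uniformly} over $\ICal$. Because $\pbm_{ST}=Law(f_{ST}(X_T))$ varies with the intermediate model, the absolute-continuity requirement $\pbm_T\ll\pbm_{ST}$ and the resulting finiteness of the KL-based output risk must hold for \emph{every} $f_{ST}\in\ICal$, not merely at a single distinguished model, so that the right-hand side of the pointwise bound is meaningful across the whole set over which the infimum is taken. I would also check that the weight $\lambda$ is identical in the two linear transfer risks, which is what lets the factor of $2$ propagate cleanly through both the input and the output terms. Once this uniform pointwise comparison is established, no further analytic work is required, as the passage to the infimum is immediate.
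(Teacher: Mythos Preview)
Your proposal is correct and follows exactly the route the paper intends: the paper derives the pointwise bound \eqref{eqn:risk-ln-KLW} from Talagrand's inequality and then states the proposition as an immediate consequence, leaving the infimum-lifting step implicit. Your write-up simply makes that step explicit, and your remark about needing $\pbm_T\ll\pbm_{ST}$ uniformly over $\ICal$ is a fair point that the paper glosses over.
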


\subsection{Transfer Risk and Regret in Gaussian-based Models}\label{subsec:gaussian}
We further analyze the properties of transfer risk by  establishing a connection between  transfer risk \eqref{eqn:risk} and  transfer learning performance in Gaussian-based models. Due to Proposition \ref{prop: talagrand}, we focus on Wasserstein-based transfer risk.

 Consider a source task $S$ and a target task $T$ with the input space $\XCal_S=\XCal_T=\R^d$ and the output space $\YCal_S=\YCal_T=\R$. We assume that both source and target data satisfy  $(d+1)$-dimensional Gaussian distributions such that $(X_\cdot,Y_\cdot)\sim \mathcal{N}(\mu_\cdot,\Sigma_\cdot)$, with
\begin{equation}\label{eq: source-data}
\mu_\cdot=\begin{pmatrix}\mu_{\cdot,X}\\\mu_{\cdot,Y}\end{pmatrix},\quad \Sigma_\cdot=\begin{pmatrix}\Sigma_{\cdot,X}&\Sigma_{\cdot,XY}\\\Sigma_{\cdot,YX}&\Sigma_{\cdot,Y}\end{pmatrix},
\end{equation}
where
$\mu_{\cdot,Y}\text{ and }\Sigma_{\cdot,Y}\in\R$, $\mu_{\cdot,X}\text{ and }\Sigma_{\cdot,XY}\in\R^d$, $\Sigma_{\cdot,YX}=\Sigma_{\cdot,XY}^\top$, and $\Sigma_{\cdot,X}\in\R^{d\times d}$. Assume that the sets of admissible source and target models $A_S=A_T=\{f:\R^d\to\R\}$. For any $f\in A_S=A_T$, assume  that the loss functions are
\begin{equation}\label{eq: lrloss}
    \LL_S(f)=\E\|Y_S-f(X_S)\|_2^2, \ \ \LL_T(f)=\E\|Y_T-f(X_T)\|_2^2.
\end{equation}
Now, consider a simple setting where the input (resp. output) transport set $\mathbb{T}^X$ (resp. $\mathbb{T}^Y$) is a singleton set only containing the identical mapping on $\R^d$ (resp. $\R$). Then, the transfer learning scheme \eqref{eq: doub-trans} is equivalent to directly applying the optimal source model $f^*_S$ to the target task. Consequently, the intermediate model set $\mathcal{I}$ in \eqref{eq: int-set} is also a singleton set with $\mathcal{I}=\{f^*_S\}$. 

Define the transfer risk in this  problem as the Wasserstein-based output transport risk:
\begin{equation}\label{eqn:linear_risk}
\CCal_W(S,T)=\CCal_W(S,T|f^*_S)=\Ef^O_W(f^*_S).
\end{equation}
Meanwhile,  
define the \textit{regret} as the gap between the transfer learning  and the direct learning:
\begin{equation}\label{eqn:regret}
    \RCal(S,T):=\LL_T(f^*_S)-\LL_T(f^*_T).
\end{equation}
Then, the following theorem shows that the transfer risk serves as a lower bound of the regret.


\begin{thm}\label{prop: lb}
    For transfer learning in the Gaussian model \eqref{eq: source-data}-\eqref{eq: lrloss}, the regret with respect to the chosen intermediate model $\RCal(S,T)$ in \eqref{eqn:regret} is lower bounded by the Wasserstein-based transfer risk in \eqref{eqn:linear_risk},
    \[\CCal_W(S,T)\leq \RCal(S,T).\]
\end{thm}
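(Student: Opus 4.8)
The plan is to reduce everything to explicit Gaussian computations and a single Cauchy--Schwarz inequality. First I would identify the two optimal models. Since $A_S=A_T$ is the set of \emph{all} functions $\R^d\to\R$ and the losses in \eqref{eq: lrloss} are squared-error, the minimizers are the regression functions $f_S^*(x)=\E[Y_S\mid X_S=x]$ and $f_T^*(x)=\E[Y_T\mid X_T=x]$, which for jointly Gaussian data are affine:
\[
f_\cdot^*(x)=\mu_{\cdot,Y}+\Sigma_{\cdot,YX}\Sigma_{\cdot,X}^{-1}(x-\mu_{\cdot,X}).
\]
In the singleton-transport setting $\mathcal I=\{f_S^*\}$, so $\CCal_W(S,T)=\Ef^O_W(f_S^*)=\mathcal{W}_2(\pbm_{ST},\pbm_T)^2$ with $\pbm_{ST}=Law(f_S^*(X_T))$ and $\pbm_T=Law(f_T^*(X_T))$.

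Next I would compute these two laws. Because $X_T\sim\mathcal N(\mu_{T,X},\Sigma_{T,X})$ and both $f_S^*,f_T^*$ are affine, $\pbm_{ST}$ and $\pbm_T$ are one-dimensional Gaussians; a direct calculation gives $\pbm_T=\mathcal N(\mu_{T,Y},v_T)$ with $v_T=\Sigma_{T,YX}\Sigma_{T,X}^{-1}\Sigma_{T,XY}$, and $\pbm_{ST}=\mathcal N(m_{ST},v_{ST})$ with $m_{ST}=\mu_{S,Y}+a(\mu_{T,X}-\mu_{S,X})$ and $v_{ST}=a\Sigma_{T,X}a^\top$, where $a:=\Sigma_{S,YX}\Sigma_{S,X}^{-1}$. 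Using the closed form of the Wasserstein-2 distance between scalar Gaussians yields
\[
\CCal_W(S,T)=(m_{ST}-\mu_{T,Y})^2+\left(\sqrt{v_{ST}}-\sqrt{v_T}\right)^2.
\]

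In parallel I would compute the regret in closed form. Expanding $\LL_T(f)=\E[(Y_T-f(X_T))^2]$ for an affine $f$ into squared-bias plus variance, one finds $\LL_T(f_T^*)=\Sigma_{T,Y}-v_T$ and $\LL_T(f_S^*)=\Sigma_{T,Y}+v_{ST}-2a\Sigma_{T,XY}+(m_{ST}-\mu_{T,Y})^2$; note the mean term of the latter matches the bias term of $\CCal_W$ exactly. Subtracting and comparing with the expression above, the whole theorem collapses to
\[
\RCal(S,T)-\CCal_W(S,T)=2\left(\sqrt{v_{ST}\,v_T}-a\Sigma_{T,XY}\right),
\]
so it suffices to prove $a\Sigma_{T,XY}\le\sqrt{v_{ST}\,v_T}$.

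This last inequality is the crux, and it is exactly a Cauchy--Schwarz statement in disguise. Since $\Sigma_{T,X}\succ 0$, it induces the inner product $\langle u,w\rangle:=u^\top\Sigma_{T,X}w$. Setting $u:=a^\top$ and $w:=\Sigma_{T,X}^{-1}\Sigma_{T,XY}$, one checks directly that $v_{ST}=\langle u,u\rangle$, $v_T=\langle w,w\rangle$, and $a\Sigma_{T,XY}=\langle u,w\rangle$; Cauchy--Schwarz then gives $a\Sigma_{T,XY}\le\sqrt{v_{ST}\,v_T}$, completing the argument. I expect the main obstacle to be purely organizational: recognizing that the cross-covariance term $a\Sigma_{T,XY}$ is the pairing of $u$ with $w$ and that $v_{ST},v_T$ are the corresponding squared norms, after which the bound is immediate. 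One should also be careful to fix $p=2$ in the Wasserstein-based risk so that it is commensurate with the quadratic loss defining the regret.
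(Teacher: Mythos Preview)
Your proposal is correct and follows essentially the same route as the paper: compute the two affine optimal regressors, identify $\pbm_T$ and $\pbm_{ST}$ as one-dimensional Gaussians, plug in the closed form of $\mathcal W_2^2$, expand the regret, and reduce the difference to a single Cauchy--Schwarz inequality. The paper phrases that last step as Cauchy--Schwarz for the Euclidean vectors $\Sigma_{T,X}^{1/2}w_S$ and $\Sigma_{T,X}^{1/2}w_T$, while you phrase it via the $\Sigma_{T,X}$-weighted inner product on $u=w_S$ and $w=w_T$; these are the same inequality, and your identification $a\Sigma_{T,XY}=\langle u,w\rangle$, $v_{ST}=\langle u,u\rangle$, $v_T=\langle w,w\rangle$ is exactly what is needed.
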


Theorem \ref{prop: lb} suggests that transfer risk provides a preliminary indication of the effectiveness of transfer learning, especially for eliminating unsuitable candidate pretrained models or source tasks when the transfer risk is high. 

\begin{proof} It is without  loss of generality that we assume source and target data are of matching dimensions, as otherwise analysis can be modified straightforwardly.
Then direct computation shows that the optimal source model 
\begin{equation}\label{eq: lr-opt-prob}
f_S^*\in\argmin_{f\in A_S}\LL_S(f)
\end{equation}
is given by 
\begin{equation}\label{eq: source-opt-model}
    f_S^*(x)=w_S^\top x+b_s,
\end{equation}
 where
 \begin{equation}\label{eq: pretrn-model}
        w_S=\Sigma_{S,X}^{-1}\Sigma_{S,XY}\in\R^d,\quad b_S=\mu_{S,Y}-\Sigma_{S,YX}\Sigma_{S,X}^{-1}\mu_{S,X}\in\R.
    \end{equation}
    

Meanwhile, the optimal target model $f_T^*$ is given by
\begin{equation}\label{eq: opt-target-model}
    f_T^*(x)=w_T^\top x+b_T,\quad \forall x\in\R^d,
\end{equation}
where
\begin{equation}\label{eq: target-opt-param}
    w_T=\Sigma_{T,X}^{-1}\Sigma_{T,XY},\quad b_T=\mu_{T,Y}-\Sigma_{T,YX}\Sigma_{T,X}^{-1}\mu_{T,X}.
\end{equation}
The corresponding output distribution is then given by
\begin{equation}\label{eq: target-pred}
\pbm_T=\E[Y|X]=N(w_T^\top\mu_{T,X}+b_T,w_T^\top\Sigma_{T,X}w_T)=N(\mu_T,w_T^\top\Sigma_{T,X}w_T).
\end{equation}


Given the optimal models in both the source task and the target task, specified by \eqref{eq: source-opt-model}-\eqref{eq: pretrn-model} and \eqref{eq: opt-target-model}-\eqref{eq: target-opt-param}, we have
\begin{equation}\label{eq: tl-init-distrn}
        \pbm_{ST}=f_S^*\# N(\mu_{T,X},\Sigma_{T,X})=N(w_S^\top\mu_{T,X}+b_S, w_S^\top\Sigma_{T,X}w_S).
\end{equation}
Notice that $\pbm_T\ll\pbm_{ST}$, therefore the Lebesgue decomposition leads to 
    $\pbm_T=\tilde\pbm_T$, such that
\begin{equation}
    \label{eq: leb-decomp}
    \frac{d\tilde\pbm_T(y)}{d\pbm_{ST}(y)}=h_{ST}(y)=\sqrt\frac{w_S^\top\Sigma_{T,X}w_S}{w_T^\top\Sigma_{T,X}w_T}\exp\left\{\frac{[y-(w_S^\top\mu_{T,X}+b_S)]^2}{2w_S^\top\Sigma_{T,X}w_S}-\frac{[y-(w_T^\top\mu_{T,X}+b)]^2}{w_T^\top\Sigma_{T,X}w_T}\right\}.
\end{equation}
More  computations show that
\[\begin{aligned}
        \mathcal{C}_{W}(S,T)&=\left[\mu_{T,Y}-\mu_{S,Y}-\Sigma_{S,YX}\Sigma_{S,X}^{-1}\left(\mu_{T,X}-\mu_{S,X}\right)\right]^2\\
        &+\left(\sqrt{\Sigma_{S,YX}\Sigma_{S,X}^{-1}\Sigma_{T,X}\Sigma_{S,X}^{-1}\Sigma_{S,XY}}-\sqrt{\Sigma_{T,YX}\Sigma_{T,X}^{-1}\Sigma_{T,XY}}\right)^2.
    \end{aligned}\]
One can show that the regret \eqref{eqn:regret} for this transfer leaning problem is given by
\begin{equation}\label{eqn:regret_explicit}
    \RCal(S,T)=\|\Sigma^{\frac{1}{2}}(w_T-w_S)\|_2^2+\left[\mu_{T,Y}-\mu_{S,Y}-\Sigma_{S,YX}\Sigma_{S,X}^{-1}\left(\mu_{T,X}-\mu_{S,X}\right)\right]^2.
\end{equation}
It is easy to verify that 
\begin{equation}\label{eqn:compare_risk_regret}
    \RCal(S,T)=\CCal_{W}(S,T)+2\left(\|\Sigma_{T,X}^{1/2}w_T\|_2\|\Sigma_{T,X}^{1/2}w_S\|_2-\langle\Sigma_{T,X}^{1/2}w_T,\Sigma_{T,X}^{1/2}w_S\rangle\right).
\end{equation}
Now Theorem  \ref{prop: lb} is an immediate consequence of \eqref{eqn:compare_risk_regret} and the Cauchy–Schwartz inequality.
\end{proof}

\begin{rmk}
    Denote the first term in \eqref{eqn:regret_explicit} as $\hat{error}_v(S,T):=\|\Sigma^{\frac{1}{2}}(w_T-w_S)\|_2^2$, and denote the second term in \eqref{eqn:regret_explicit} as $\hat{error}_b(S,T):=\left[\mu_{T,Y}-\mu_{S,Y}-\Sigma_{S,YX}\Sigma_{S,X}^{-1}\left(\mu_{T,X}-\mu_{S,X}\right)\right]^2$. Meanwhile, denote
    \[\CCal_w(S,T)=error_{v,W}(S,T)+error_{b,W}(S,T),\]
    where
    \[\begin{aligned}
    & error_{v,W}(S,T)=\left(\sqrt{\Sigma_{S,YX}\Sigma_{S,X}^{-1}\Sigma_{T,X}\Sigma_{S,X}^{-1}\Sigma_{S,XY}}-\sqrt{\Sigma_{T,YX}\Sigma_{T,X}^{-1}\Sigma_{T,XY}}\right)^2,\\
    & error_{b,W}(S,T)=\left[\mu_{T,Y}-\mu_{S,Y}-\Sigma_{S,YX}\Sigma_{S,X}^{-1}\left(\mu_{T,X}-\mu_{S,X}\right)\right]^2.
    \end{aligned}\]
    That is, transfer risk can be decomposed into two parts, one being  the variance terms $error_{v,W}$ determined by the covariance matrices of the source and target data, and the other being the bias terms $error_{b,W}$ dependent on the difference between the expectations of $\mu_T$ and $\mu_S$.  
    Then,
    \begin{itemize}
        \item A vanishing bias term in transfer risks is equivalent to a vanishing bias term in regret, i.e., $\hat{error}_{b}(S,T)=0\Longleftrightarrow error_{b,W}(S,T)=0$.
        \item A vanishing variance term in transfer risk is necessary for a vanishing variance term in regret, i.e., $\hat{error}_{v}(S,T)=0\Longrightarrow error_{v,W}(S,T)=0$. 
        \item The residual term $2\left(\|\Sigma_{T,X}^{\frac{1}{2}}w_T\|_2\|\Sigma_{T,X}^{\frac{1}{2}}w_S\|_2-\langle\Sigma_{T,X}^{\frac{1}{2}}w_T,\Sigma_{T,X}^{\frac{1}{2}}w_S\rangle\right)$ in \eqref{eqn:compare_risk_regret} depends entirely on the source and target covariance matrices $\Sigma_S$ and $\Sigma_T$is due to the variance term in the learning objective difference. Therefore, when $\CCal_{W}(S,T)=0$, the training process is to reduce the angular distance between $\Sigma_{T,X}^{1/2}w_S$ and $\Sigma_{T,X}^{1/2}w_T$ caused by the discrepancy in these two covariance matrices.
        \item The bias risk component $error_{b,W}(S,T)$ remain strictly positive unless the weighted difference between the expectations $\mu_T$ and $\mu_S$ is $0$.
    \end{itemize}
\end{rmk}

\section{Transfer Learning for  Stock Return Prediction}\label{sec:return_predict}
 Now, through the optimization framework \eqref{eq: doub-trans} and the concept of transfer risk \eqref{eqn:risk}, we will study the stock return prediction problem in this section.  Numerical experiments will show the consistency of the transfer risk with classical statistical metrics. In addition, improved prediction accuracy may be achieved under appropriate  transfer learning over  direct learning. 

\subsection{Prediction problem and setup}\label{sec:return_predict_set_up}
The task is to predict a stock's daily return based on its historical daily return and daily trading volume data.

\paragraph{Data pre-processing via signature transform.}
Denote $\{s_\tau\}_{\tau\geq 0}$ as the historical daily close price of a stock and $\{v_\tau\}_{\tau\geq 0}$ as its historical daily trading volume. To facilitate the subsequent computation, we first apply the log transform to the data, which results in $\{\log(s_\tau)\}_{\tau\geq 0}$ and $\{\log(v_\tau)\}_{\tau\geq 0}$. At each time step $t=0, 1,\dots$, we aim to predict the next daily log return $y_t=\log(s_{t+1})-\log(s_t)$, based on the historical price and volume data $\{\log(s_\tau)\}_{0\leq\tau\leq t}$ and $\{\log(v_\tau)\}_{0\leq\tau\leq t}$.

To adopt  the transfer learning framework, we propose to generate features from the data via the so called \textit{signature transform}, as recalled below; for more details on this topic, see for instance \citep{chen1954iterated,chen1957integration,chen1958integration}, \citep{boedihardjo2016signature}, and \citep{kidger2019deep}.  
\begin{defn}[Signature]\label{defn:signature}
    For a continuous piecewise smooth path $x:[0,T]\to\R^d$, its signature is given by
    \[S(x)=(1,S(x)^{(1)},\dots,S(x)^{(d)},S(x)^{(1,1)},\dots,S(x)^{(i,j)},\dots,S(x)^{(d,d)},\dots),\]
    where for $m=1,2,\dots,$ and $i_1,\dots,i_m\in[d]$,
    \[S(x)^{(i_1,\dots,i_m)}=\int_{0\leq t_1<\dots<t_m\leq T}dx^{i_1}_{t_1}\dots dx^{i_m}_{t_m}.\]
    The truncated signature up to degree $M$ is given by
    \[S_M(x)=(1,S(x)^{(1)},\dots,S(x)^{(\overbrace{n,\dots,n}^{M})}).\]
\end{defn}
It is well known now that the signature of a path generated by a sequence of data essentially determines the path in a computationally  efficient way. Furthermore, its universal nonlinearity property allows for the approximation of  every continuous function of the path  by a linear function of its signature transform. 
Here in the prediction problem,  the features are constructed by the signature transform through the following procedure:
\begin{enumerate}
    \item Fix a time lag $L$ and an order parameter $M$.
    
    \item For each time step $t=0, 1,\dots$, consider the historical price and volume data with time lag $L$: $\{\log(s_\tau)\}_{t-L+1\leq\tau\leq t}$ and $\{\log(v_\tau)\}_{t-L+1\leq\tau\leq t}$.

    \item Consider a three-dimensional path $z_\tau=(\tau, \log(s_\tau), \log(v_\tau))$ for $t-L+1\leq\tau\leq t$, and compute its $M^{\text{th}}$-order truncated signature $x_t=S_M(z)$, as defined in Definition \ref{defn:signature}.

    \item Construct the data set for the prediction problem with the feature vector $x_t$ and the target variable $y_t$: $\{(x_t, y_t)\}_{t\geq 0}$. Each feature, as well as the target variable, will be standardized before inputting into any prediction model.
\end{enumerate}
The time lag parameter $L$ and the order parameter $M$ are the hyper-parameters of the prediction model, and their effects on the prediction accuracy will be studied in the numerical results.

\paragraph{Prediction with Ridge regression via direct learning.} Given the training set $\{(x_t, y_t)\}_{0\leq t\leq T-1}$, and in particular the universal non-linearity property of the signtuare transform, it is natural to consider a linear model $y_t=x_t\cdot\theta+\epsilon$  through the ridge regression. Here  the $L_2$ penalized  addresses the potential multi-collinearity issue of features from the signature transform. 

Given a particular target stock, a direct learning scheme  starts with constructing the signature feature data set $\{(x_t, y_t)\}_{0\leq t\leq T-1}$. Then the entire time horizon will be split into a training period and a testing period. A Ridge regression is then applied to fit the linear model on the train data set, and the performance of the model will be evaluated on the test data set.
More specifically, the Ridge regression finds the estimator $\widehat\theta$ by solving the following $L_2$ penalized least square problem 
\begin{align}\label{eqn:ridge}
\widehat{\theta}:=\argmin_\theta\frac{1}{T}\sum_{t=0}^{T-1}\left(x_t\cdot\theta - y_t\right)^2 + \lambda \left\|\theta\right\|_2^2.
\end{align}
The choice of the hyper-parameter $\lambda>0$ controls the bias-variance tradeoff of the method: a bigger $\lambda$ indicates a lower variance and a higher bias.

Since the data set constructed from one single stock may only contain limited samples, it motivates the idea of boosting the prediction accuracy by transfer learning.

\paragraph{Prediction with Ridge regression via transfer learning.}
Transfer learning allows  one to first pre-train the Ridge regression model on a data set constructed from multiple stocks. That is, consider a target task which is to predict the return of one particular stock, say Apple. One can pre-train a model on a source task which contains several related stocks (such as Google, Amazon, Microsoft) from the similar industry. 
In order to transfer the per-trained model to the target task, one then again fits a new model with the Ridge regression, where the $L_2$ regularization term penalizes on the distance with respect to the pre-trained model.

Here the source task and target task share the same input and output spaces $\mathcal{X}_S=\mathcal{X}_T=\mathbb{R}^d_\text{sig}$ and $\mathcal{Y}_S=\mathcal{Y}_T=\mathbb{R}$, where $d_\text{sig}$ is the dimension of the signature feature and is fully determined by the order $M$. Meanwhile, the admissible sets of source and target models are restricted to all the linear functions from $\mathbb{R}^d_\text{sig}$ to $\mathbb{R}$: $A_S=A_T=\{f:\mathbb{R}^d_\text{sig}\to\mathbb{R}|f(x)=x\cdot\theta\text{ for some }\theta\in\mathbb{R}^d_\text{sig}\}$. Denote the source data set as $\{(x_{S,t}, y_{S,t})\}_{0\leq t\leq T_S-1}$ and the target data set $\{(x_{T,t}, y_{T,t})\}_{0\leq t\leq T_T-1}$. The pre-trained model is first obtained by solving the following optimization problem
\begin{align}\label{eqn:ridge_source}
\widehat{\theta_S}:=\argmin_\theta\frac{1}{T_S}\sum_{t=0}^{T_S-1}\left(x_{S,t}\cdot\theta - y_{S,t}\right)^2 + \lambda_S \left\|\theta\right\|_2^2.
\end{align}
This formulation \eqref{eqn:ridge_source} is equivalent to \eqref{eq: obj-s} by taking the loss function $L_S$ as the square loss plus a regularization term on the $l_2$-norm of the linear parameter $\theta$, while the expectation is taken over the empirical distribution of source samples $\{(x_{S,t}, y_{S,t})\}_{0\leq t\leq T_S-1}$.

Transfer learning is then to find the solution for the following optimization problem
\begin{align}\label{eqn:ridge_target}
\widehat{\theta_T}:=\argmin_\theta\frac{1}{T_T}\sum_{t=0}^{T_T-1}\left(x_{T,t}\cdot\theta - y_{T,t}\right)^2 + \lambda_T \left\|\theta-\widehat{\theta_S}\right\|_2^2.
\end{align}
Here $\lambda_T>0$ in \eqref{eqn:ridge_target} is a hyper-parameter controls the power of the regularization: the higher $\lambda$ is, the closer the transferred model $\widehat{\theta_T}$ will be to the pre-trained model $\widehat{\theta_S}$. 

Note that from the point of view of \eqref{eq: doub-trans}, \eqref{eqn:ridge_target} is equivalent to searching an output transport mapping $T_Y$ over the linear function space $A_T=\{f:\mathbb{R}^d_\text{sig}\to\mathbb{R}|f(x)=x\cdot\theta\text{ for some }\theta\in\mathbb{R}^d_\text{sig}\}$, while the loss function $L_T$ takes the form of the square loss in addition to a regularization on the $l_2$ distance from the source model parameter $\widehat{\theta_S}$.

\subsection{Numerical results}
The data used in the  numerical experiment is the historical daily stock  price and trading volume from the Information Technology sector of the US equity market,  from February 2010 to September 2022. For each experiment, a set of eleven stocks will be first randomly sampled from the Information Technology sector, with the first ten of them served as the source task, and the last one as the target task. Three data sets will be constructed according to the data pre-processing steps discussed in Section \ref{sec:return_predict_set_up}:
\begin{enumerate}
    \item \textbf{Source training data}: which consists of the signature features and log returns of the first ten source stocks, from February 2010 to September 2021.

    \item \textbf{Target training data}: which consists of the signature features and log returns of the last target stock, from February 2010 to September 2021.

    \item \textbf{Target testing data}: which consists of the signature features and log returns of the last target stock, from September 2021 to September 2022.
\end{enumerate}

We compare two approaches
\begin{enumerate}
    \item \textbf{Direct learning}: the model is directly trained on the target training data by the Ridge regression \eqref{eqn:ridge}.
    
    \item \textbf{Transfer learning}: the model is first pre-trained on the source training data by the Ridge regression \eqref{eqn:ridge_source}, and is then retrofit on the target training data by \eqref{eqn:ridge_target}.
\end{enumerate}

The performances are evaluated on the target testing data through three different metrics: mean square error ($\textbf{MSE}$), $\textbf{R}^2$, and correlation between predicted return and actual return (\textbf{Corr}). We also change the hyper-parameters: lag $L$ and order $M$ in the signature-based feature generation steps to study their influences on the performances. The regularization parameters in \eqref{eqn:ridge_source} and \eqref{eqn:ridge_target} are set as $\lambda_S=1.0$ and $\lambda_T=5.0$. The full results are listed in Table \ref{tab:prediction}, and the results are averaged over two hundred random selections of stocks.

\paragraph{Main findings.} In Table \ref{tab:prediction}, the top performer under each metric is marked red. For example, the lowest MSE is obtained when transfer learning  is applied to the prediction task, with $L=2, M=3$. As seen from the table,  transfer learning achieves the best results under all three different metrics, for this task of return prediction. 

Table \ref{tab:prediction_score} 
summarizes the relations between the Wasserstein-based transfer risk $\mathcal{C}(S,T)=\Ef^O_W(\widehat{\theta_S})$ in \eqref{eqn:W_risk}, where $p=2$, and three other standard metrics: MSE, $R^2$, and correlation. More specifically, with $L=5, M=2, \lambda_S=1.0, \lambda_T=5.0$, we randomly choose the source and target stocks as described earlier, apply transfer learning to the prediction problem under these four metrics. Here we record the negations of $\textbf{R}^2$ and $\text{Corr}$ so that lower numbers imply better performances. The correlation matrix, Table \ref{tab:prediction_score}, is computed over two hundred random experiments. It can be observed that the Wasserstein-based transfer gap is positively correlated with other three metrics, especially with the mean square error. This finding suggests that the lower the transfer risk, the better the transfer learning performance.

Finally, it is worth pointing out  the impact of the signature order $M$: in most cases,  larger $M$ ($M\geq4$) does not improve the performance of either transfer learning or direct learning.    This implies that empirically, drift, volatility, and skewness, corresponding to signature features with $M\leq3$,  are more useful in stock return predictions than other higher-order moments. And a large $M$, which may lead to an excessive number of features with possible over-fitting issues, is computationally expensive and unnecessary in the prediction task.

\begin{table}[H]
  \centering
    \begin{tabular}{c|c|c|c|c|c|c|c}
    \toprule
    \toprule
          & \textbf{Algorithm} & \multicolumn{3}{c|}{\textbf{Direct}} & \multicolumn{3}{c}{\textbf{Transfer}} \\
    \midrule
    \textbf{Lag} & \textbf{Order\textbackslash{}Metric} & \multicolumn{1}{c|}{\textbf{MSE}} & \multicolumn{1}{c|}{$\textbf{R}^2$} & \multicolumn{1}{c|}{\textbf{Corr}} & \multicolumn{1}{c|}{\textbf{MSE}} & \multicolumn{1}{c|}{$\textbf{R}^2$} & \multicolumn{1}{c}{\textbf{Corr}} \\
    \midrule
    \multirow{3}[6]{*}{\textbf{2.00}} & \textbf{2.00} & 1.21  & 0.07  & 0.03  & 1.66  & 0.08  & 0.04 \\
\cmidrule{2-8}          & \textbf{3.00} & 1.41  & 0.06  & 0.03  & \textcolor[rgb]{1,0,0}{1.15} & 0.07  & 0.03 \\
\cmidrule{2-8}          & \textbf{4.00} & 1.88  & -0.06 & 0.03  & 1.91  & 0.06  & 0.02 \\
    \midrule
    \multirow{3}[6]{*}{\textbf{3.00}} & \textbf{2.00} & 1.75  & 0.06  & -0.01 &  1.55  & 0.07  & -0.01 \\
\cmidrule{2-8}          & \textbf{3.00} & 1.58  & 0.04  & 0.02  & 1.27  & 0.05  & 0.03 \\
\cmidrule{2-8}          & \textbf{4.00} & 1.72  & -0.14 & 0.00  & 1.70  & -0.07 & 0.01 \\
    \midrule
    \multirow{3}[6]{*}{\textbf{5.00}} & \textbf{2.00} & 1.62  & 0.08  & 0.04  & 1.54  & \textcolor[rgb]{1,0,0}{0.08} & 0.06 \\
\cmidrule{2-8}          & \textbf{3.00} & 1.74  & 0.05  & 0.04  & 2.28  & 0.07  & 0.04 \\
\cmidrule{2-8}          & \textbf{4.00} & 1.63  & -0.13 & 0.02  & 2.05  & 0.05  & 0.06 \\
    \midrule
    \multirow{3}[6]{*}{\textbf{10.00}} & \textbf{2.00} & 1.31  & 0.08  & 0.06  & 1.75  & 0.08  & \textcolor[rgb]{1,0,0}{0.07} \\
\cmidrule{2-8}          & \textbf{3.00} & 1.45  & 0.04  & -0.01 & 1.53  & 0.07  & 0.04 \\
\cmidrule{2-8}          & \textbf{4.00} & 2.48  & -0.14 & 0.07  & 2.84  & -0.01 & 0.06 \\
    \bottomrule
    \bottomrule
    \end{tabular}
    \caption{Prediction performance of direct and transfer learning.}
  \label{tab:prediction}
\end{table}

\begin{table}[H]
  \centering
    \begin{tabular}{c|cccc}
    \toprule
    \toprule
          & \multicolumn{1}{c}{\textbf{MSE}} & \multicolumn{1}{c}{$-\textbf{R}^2$} & \multicolumn{1}{c}{$-$\textbf{Corr}} & \multicolumn{1}{c}{\textbf{Transfer Risk}} \\
    \midrule
    \textbf{MSE} & 1.000 & 0.275 & 0.136 & 0.963 \\
    $-\textbf{R}^2$ & 0.275 & 1.000 & 0.588 & 0.401 \\
    $-$\textbf{Corr} & 0.136 & 0.588 & 1.000 & 0.173 \\
    \textbf{Transfer Risk} & 0.963 & 0.401 & 0.173 & 1.000\\
    \bottomrule
    \bottomrule
    \end{tabular}
  \caption{Correlation matrix of various metrics.}
  \label{tab:prediction_score}
\end{table}

\section{Transfer Learning for Portfolio Optimization}\label{sec:finance_PO}
Portfolio optimization is another natural testing ground for transfer learning: portfolios with assets from newly-emerged markets versus those from more mature markets, portfolios across different industrial sectors, and portfolios under different trading frequencies. In this section, we will test the performance of transfer learning in an unsupervised setting  and its transfer risk  for the  financial portfolio optimization problem. 

In particular, we will consider three types of portfolio transfers: 
\begin{enumerate}
    \item Cross-continent transfer. This refers to the case when one transfers a portfolio from the equity market in one country to the equity market in another, e.g., from the US equity market to the Brazil equity market. In general, the source market has more historical data or more diverse stocks than the target market, which may provide the target market with a robust pre-trained portfolio. The study of cross-continent transfer aims to understand how continental discrepancy affects the performance of transfer learning. 

    \item Cross-sector transfer. This refers to the case when one transfers a portfolio from one sector of a market to another sector, e.g., from the Information Technology sector to the Health Care sector in the US equity market. The study of cross-sector transfer aims to understand correlations between various sectors in the market, and how correlations between sectors affect the performance of transfer learning.

    \item Cross-frequency transfer. This refers to the case when one transfers a portfolio constructed under one trading frequency to another trading frequency, e.g., from low-frequency trading to mid or high-frequency trading. The study of cross-frequency transfer aims to explore the possibility of transferring the portfolio across different trading frequencies, which may be relevant for institutional investors.
\end{enumerate}


\subsection{Portfolio Optimization Problem Set-up}\label{subsec:port_opt_setup}
\subsubsection{Portfolio Optimization Based on Sharpe Ratio} 
Consider a capital market consisting of $d$ assets whose annualized returns are captured by the random vector $r = (r_1, ..., r_d)^\top \sim \mathbb{P}.$ A portfolio allocation vector $\phi = (\phi_1, ..., \phi_d)^\top$ is a $d$-dimensional vector in the unit simplex $\mathbb{X} := \{\phi \in \mathbb{R}^d_+: \sum_{j=1}^d \phi_j = 1 \}$ with $\phi_i$ percentage of the available capital invested in asset $i$ for each $i=1,...,d$. The annualized return of a portfolio $\phi$ is given by $\phi^\top r.$ 

We aim to find the optimal portfolio with the highest Sharpe ratio by solving the following optimization problem:    
\begin{equation}\label{eqn:portfolio_opt}
    \widehat{\phi}=\argmax_{\phi \in \mathbb{X}}\frac{\mathbb{E}^{\mathbb{P}}[\phi^\top r]}{\text{Std}(\phi^\top r)}=\argmax_{\phi \in \mathbb{X}}\frac{\mu_\mathbb{P}^\top\phi}{\sqrt{\phi^\top\Sigma_\mathbb{P}\phi}},
\end{equation}
where $\mu_\mathbb{P}$ is the expectation and $\Sigma_\mathbb{P}$ is the covariance matrix  of the return $r$. Empirically, $\mu_\mathbb{P}$ and $\Sigma_\mathbb{P}$ are estimated from the historical return.

In many cases, such as managing portfolios in new emerging markets, there are limited data for directly estimating $\mu_\mathbb{P}$ and $\Sigma_\mathbb{P}$, which may result in large estimation error and lead to a non-robust portfolio. We will show here that transfer learning could be a natural and viable framework to resolve this problem.

\subsubsection{Portfolio Optimization with Transfer Learning}
In this example, the source task and target task share the same input and output spaces $\mathcal{X}_S=\mathcal{X}_T=\mathbb{R}^d$ and $\mathcal{Y}_S=\mathcal{Y}_T=\mathbb{R}$, where $d$ is the number of assets in the portfolio. Note that different from the previous return prediction example, there is no sample from the output space that is explicitly observed. Instead, one can only get historical stock returns as input data. Meanwhile, a portfolio vector $\phi$ can be viewed as a linear mapping from $\mathbb{R}^d$ to $\mathbb{R}$, taking a $d$-dimensional stock return to a portfolio return. More specifically, the admissible sets of source and target models are restricted to: $A_S=A_T=\{f:\mathbb{R}^d\to\mathbb{R}|f(r)=\phi^\top r\text{ for some }\phi\in\mathbb{X}\}$. In addition, for the source task, the loss functional $\mathcal{L}_S$ in \eqref{eq: obj-s} is set to be the negative Sharpe ratio, and consequently, the source task is to solve following the optimization problem:
\begin{equation}\label{eqn:portfolio_opt_source}
    \widehat{\phi_S}=\argmax_{\phi \in \mathbb{X}}\frac{\mu_S^\top\phi}{\sqrt{\phi^\top\Sigma_S\phi}},
\end{equation}
where $\mu_S$ and $\Sigma_S$ are the mean and covariance estimations from the source data set. 

Then, the portfolio is transferred to the target task by solving the following optimization with a $L_2$ regularization term penalizing the distance between the pre-trained portfolio and the transferred portfolio:
\begin{equation}\label{eqn:portfolio_opt_target}
    \widehat{\phi_T}=\argmax_{\phi \in \mathbb{X}}\frac{\mu_T^\top\phi}{\sqrt{\phi^\top\Sigma_T\phi}}-\lambda\left\|\widehat{\phi_S}-\phi\right\|^2_2.
\end{equation}
Here $\mu_T$ and $\Sigma_T$ are the mean and covariance estimations from the target data, and $\lambda>0$ is a hyper-parameter controls the power of the regularization: the higher $\lambda$ is, the closer the transferred portfolio $\widehat{\phi_T}$ will be to the pre-trained portfolio $\widehat{\phi_S}$. 

On one hand, by adding the penalty term \(-\lambda\|\hat{\phi_S}-\phi\|_2^2\) in \eqref{eqn:portfolio_opt_target}, we add some ``supervised learning'' flavor to this unsupervised learning problem: provided that transfer risk is low, optimal target portfolio should be comparable with the source one. On the other hand, from the viewpoint of Remark \ref{rmk:tl-unsup}, \eqref{eqn:portfolio_opt_target} is equivalent to searching an output transport mapping $T^Y$ over the linear function space $\T^Y=A_T=\{f:\mathbb{R}^d\to\mathbb{R}|f(r)=\phi^\top r\text{ for some }\phi\in\mathbb{X}\}$, while the loss functional $\mathcal{L}_T$ in \eqref{eq: doub-trans} takes the form of negative Sharpe ratio in addition to a regularization on the $l_2$ distance from the source model $\widehat{\phi_S}$. The entire procedure of the portfolio transfer is summarized in Figure \ref{fig:procedure}.
\begin{figure}
    \centering
    \includegraphics[width=0.45\textwidth]{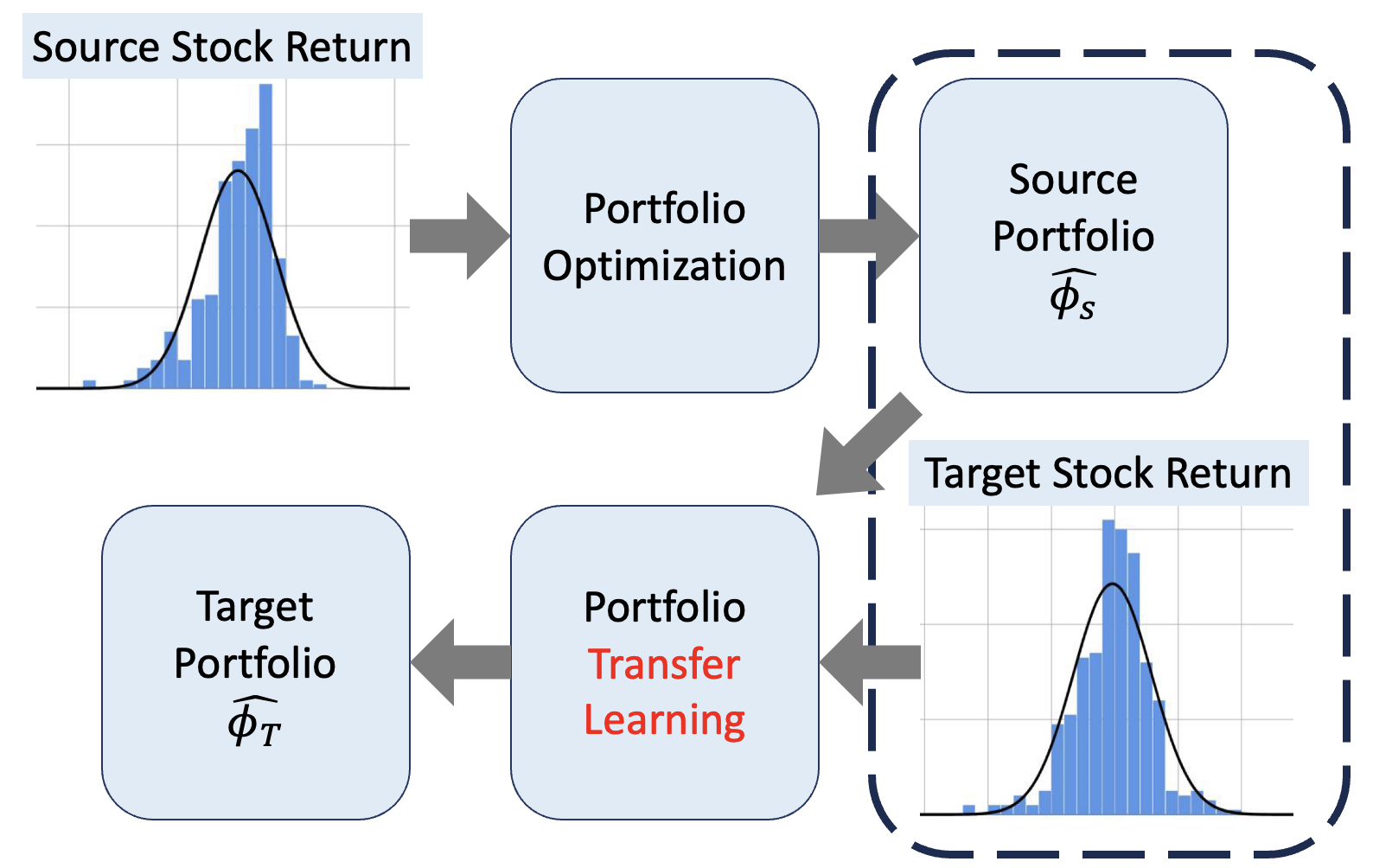}
    \caption{Procedure of portfolio transfer.}
    \label{fig:procedure}
\end{figure}





\subsubsection{Computation of Transfer risk}\label{subsec:port_opt_tf_risk}
Under the context of transfer learning for portfolio optimization, the concept of transfer risk encompasses two aspects, namely the ``quality'' and the ``relevance'' of the chosen source portfolio. 
That is, the transfer risk $\mathcal{C}_{PO}(S,T)$ is expressed as
\begin{equation}
\label{eqn:po-tr}
\mathcal{C}_{PO}(S,T)=R_1+R_2.
\end{equation}
Here $R_1$  concerns with the performance of the source portfolio, and is defined as
\begin{equation}
\label{eqn:R1}
R_1 = \left(\frac{\mu_S^\top\widehat{\phi_S}}{\sqrt{\widehat{\phi_S}^\top\Sigma_S\widehat{\phi_S}}}\right)^{-1}.
\end{equation}
This expression is inversely proportional to the Sharpe ratio of the source task, and it serves as a measure of risk associated with selecting a source task that exhibits poor portfolio performance.

The second component $R_2$ measures the similarity between the source and target portfolios in terms of their data distributions. More specifically, it approximates the return distributions of the source and target tasks using mean and covariance estimates $\left(\mu_S, \Sigma_S\right)$ and $\left(\mu_T, \Sigma_T\right)$, respectively, and choose probability distributions \(\mathcal{M})_S(\mu_S,\Sigma_S)\) and \(\mathcal{M}_T(\mu_T,\Sigma_T)\) that appropriately fit the portfolio data with matching first two moments, respectively. Consequently, $R_2$ is defined as the Wasserstein-2 distance between these two distributions:
\begin{equation}
\label{eqn:R2}
R_2=\mathcal{W}_2\left(\mathcal{M}\left(\mu_S, \Sigma_S\right), \mathcal{M}\left(\mu_T, \Sigma_T\right)\right).
\end{equation}
This definition can be essentially viewed as computing the input transport risk for the identity input mapping $T_0^X=id_{\mathcal{X}}$.

\subsection{Experimental Settings and Numerical Results}
Throughout the experiments, we test the performance of transfer learning and transfer risk under various portfolio optimization problems, including cross-continent transfer, cross-sector transfer, and cross-frequency transfer. For the ease of illustration, we adopt multivariate Gaussian distributions for source and target data, i.e., \(\mathcal{M}\left(\mu_{\cdot}, \Sigma_{\cdot}\right)=\mathcal{N}\left(\mu_{\cdot}, \Sigma_{\cdot}\right)\); the possibility of fitting the data with different choices of probability distributions including sub-Gaussian, and heavy-tail distributions will be explored in future works.

\subsubsection{Cross-Continent Transfer} 
In these numerical experiments, the source market is defined as the US equity market, while the target markets are chosen to be United Kingdom, Brazil, Germany, and Singapore, in four separate experiments respectively. Our findings indicate that transfer learning is more likely to outperform direct learning in European markets, such as Germany, while its performance is comparatively worse in the Brazil market. Notably, we observe a strong correlation between transfer risk and the transfer learning performance across all the different markets.

Given a target market, we first select out the top ten stocks with the largest market capitals ($d=10$) as the class of target assets. Then, ten stocks will be randomly selected from the S\&P500 component stocks as the class of source assets. Three data sets will be constructed accordingly:
\begin{enumerate}
    \item \textbf{Source training data}: it consists of the daily returns of ten source assets, from February 2000 to February 2020.

    \item \textbf{Target training data}: it consists of the daily returns of ten target assets, from February 2015 to February 2020.

    \item \textbf{Target testing data}: it consists of the daily returns of ten target assets, from February 2020 to September 2021.
\end{enumerate}

We compare direct learning with transfer learning: for direct learning, the portfolio is directly learned by solving \eqref{eqn:portfolio_opt}, with mean and covariance estimated from target training data; for transfer learning, the portfolio is first pre-trained on the source training data by solving \eqref{eqn:portfolio_opt_source}, then fine-tuned on the target training data by solving \eqref{eqn:portfolio_opt_target}. Finally, the performances of those methods are evaluated through their Sharpe ratios on the target testing data. The regularization parameter $\lambda$ in \eqref{eqn:portfolio_opt_target} is set to be $0.2$. Meanwhile, we also compute the transfer risk following \eqref{eqn:po-tr} in Section \ref{subsec:port_opt_setup}, using the source training data and target testing data.

For each target market, the results across one thousand random experiments (randomness in selections of source assets) are plotted in Figure \ref{fig:cross_continent}. 

Across those four markets, a consistent pattern is observed: the transfer risk is significantly correlated with the Sharpe ratio of the transferred portfolio (with correlation around -0.60). This observation supports the idea of using transfer risk as a measurement for the transferability of a task. Meanwhile, the performance of transfer learning is compared with that of direct learning in Figure \ref{fig:cross_continent}. For each target market, the dashed green line indicates the direct learning performance. The blue dots above the green line represent transfer learning tasks that outperform the direct learning. Note that for all four target markets, there are a significant amount of transfer learning tasks outshining the direct learning, especially for those tasks achieving low transfer risk.

As shown in Figure \ref{fig:cross_continent}, transfer learning is more likely to outperform direct learning in European markets, such as Germany, while its performance is worse in the Brazil market. A possible interpretation to this finding is that European markets have tighter connections and higher similarities with the US market, which help to boost the performance of transfer learning.

\subsubsection{Cross-Sector Transfer} 
In this numerical experiment, we focus on transfer learning among ten different sectors in the US equity market: Communication Services, Consumer Discretionary, Energy, Financials, Health Care, Industrials, Information Technology, Materials, Real Estate, and Utilities. We conduct two separate experiments, one with S\&P500 stocks and the other with non-S\&P500 stocks, to compare the differences in transfer learning performance and transfer risk. The analysis reveals that transfer risks in Health Care and Information Technology sectors display large negative correlations with transfer learning outcomes, effectively characterizing transferability in these sectors. In contrast, correlations are not significant for Utilities and Real Estate, potentially due to factors not fully captured by transfer risk. Additionally, for some sectors such as Energy, the correlations become more significant within non-S\&P500 stocks than S\&P500 stocks.



\begin{figure}[H]
     \centering
     \begin{subfigure}[b]{0.4\textwidth}
         \centering
         \includegraphics[width=\textwidth]{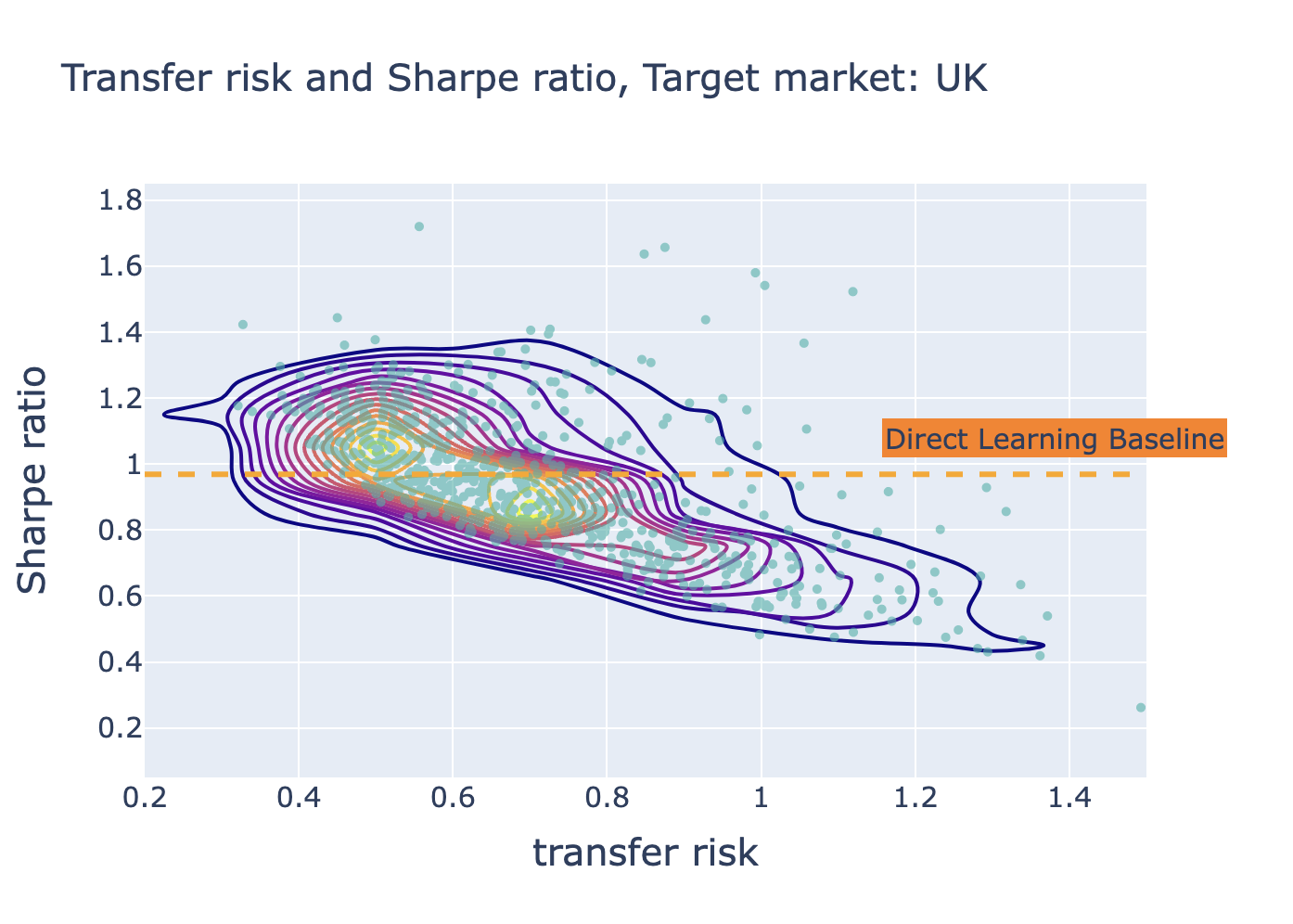}
         \caption{US$\to$UK, correlation=-0.66}
         \label{fig:US-UK}
     \end{subfigure}
     ~
     \begin{subfigure}[b]{0.4\textwidth}
         \centering
         \includegraphics[width=\textwidth]{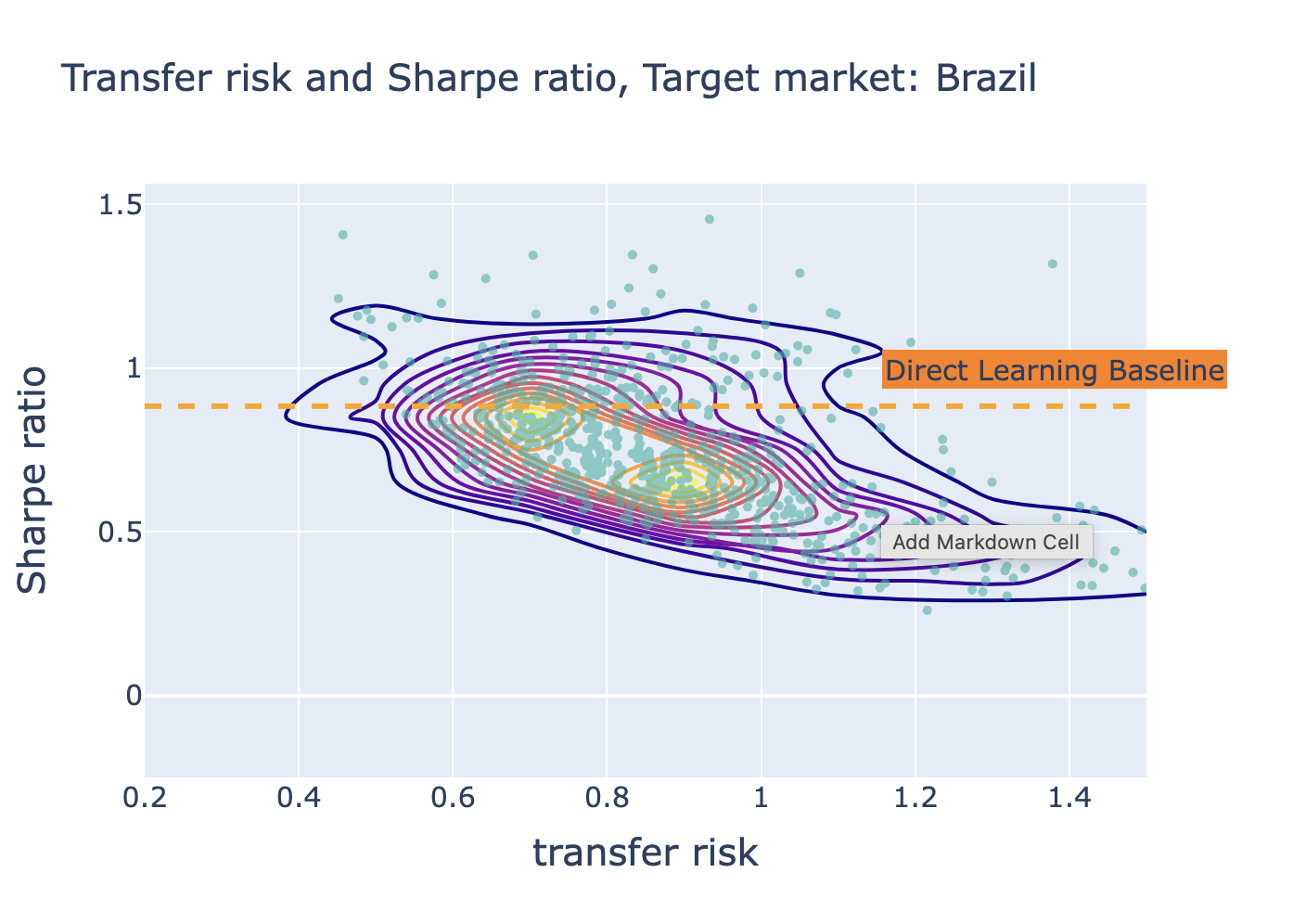}
         \caption{US$\to$Brazil, correlation=-0.67}
         \label{fig:US-Brazil}
     \end{subfigure}
     \\
     \begin{subfigure}[b]{0.4\textwidth}
         \centering
         \includegraphics[width=\textwidth]{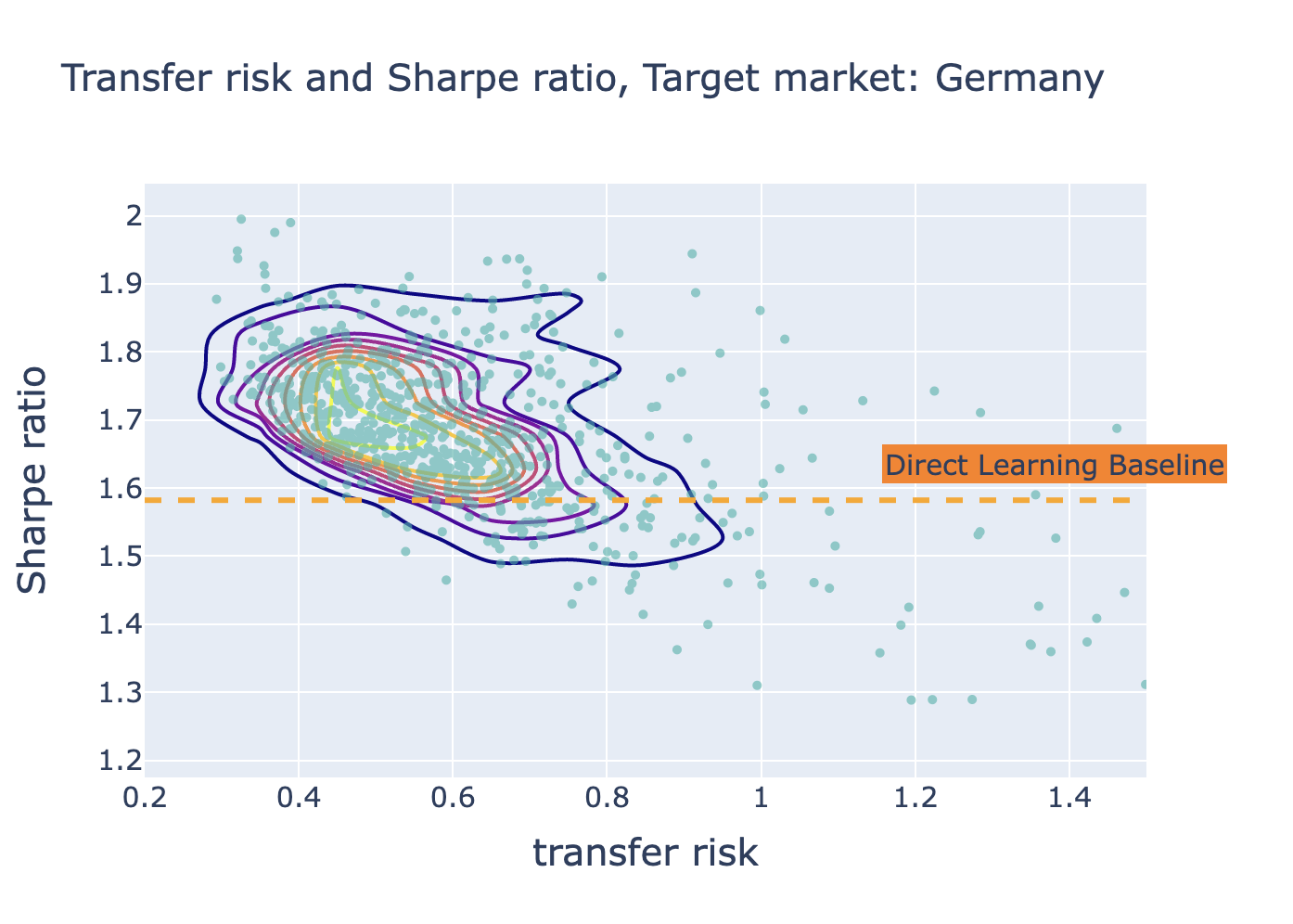}
         \caption{US$\to$Germany, correlation=-0.64}
         \label{fig:US-Germany}
     \end{subfigure}
     ~
     \begin{subfigure}[b]{0.4\textwidth}
         \centering
         \includegraphics[width=\textwidth]{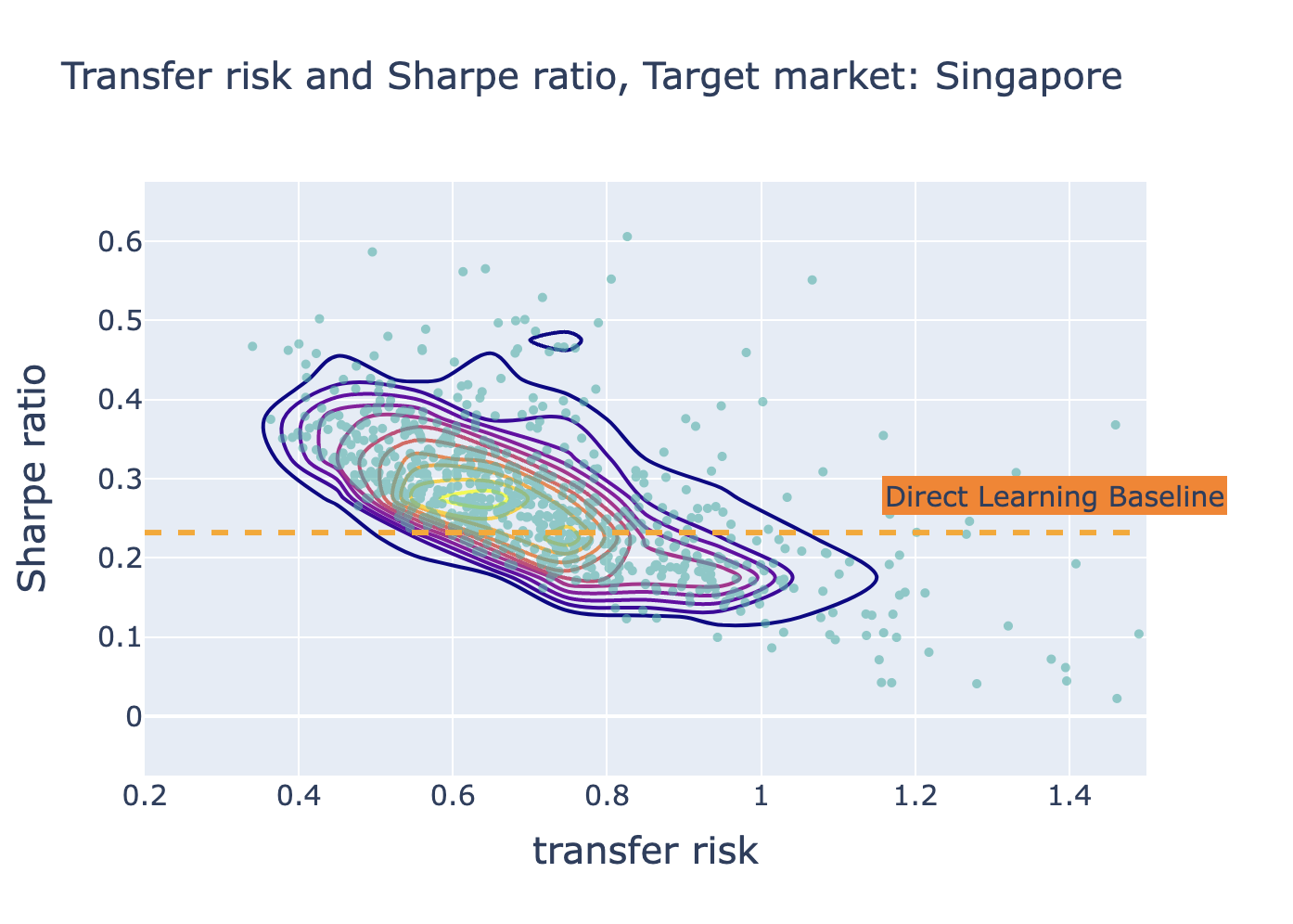}
         \caption{US$\to$Singapore, correlation=-0.62}
         \label{fig:US-Singapore}
     \end{subfigure}
     \caption{Sharpe ratio and transfer risk when transferring from the US market to other markets.}
     \label{fig:cross_continent}
\end{figure}

More specifically, given a source sector and a target sector, we first randomly sample ten stocks ($d=10$) from each sector, as the source asset class and target asset class. Then, three data sets will be constructed accordingly:

\begin{enumerate}
    \item \textbf{Source training data}: it consists of the daily returns of ten stocks from a given source sector, from February 2000 to February 2020.

    \item \textbf{Target training data}: it consists of the daily returns of ten stocks from a given target sector, from February 2015 to February 2020.

    \item \textbf{Target testing data}: it consists of the daily returns of ten stocks from a given target sector, from February 2020 to September 2021.
\end{enumerate}

The transfer learning scheme applied in the experiments is same as before: the portfolio is first pre-trained on the source training data by solving \eqref{eqn:portfolio_opt_source}, then fine-tuned on the target training data by solving \eqref{eqn:portfolio_opt_target}. Finally, the performance of the portfolio is evaluated through its Sharpe ratio on the target testing data. The regularization parameter $\lambda$ in \eqref{eqn:portfolio_opt_target} is set to be $0.2$. Meanwhile, the computation of transfer risk follows \eqref{eqn:po-tr} in Section \ref{subsec:port_opt_setup}, using the source training data and target testing data. 

For each source-target sector pair (in total $10\times10=100$ pairs), five hundred random experiments with different stock selections are conducted, and we record the average Sharpe ratio and average transfer risk of those random experiments. 


Figure \ref{fig:sector_sp500_health} shows the relation between (average) Sharpe ratios and (average) scores when transferring portfolios from various source sectors to the target Health Care sector. In general, the negative correlation between Sharpe ratios and scores is observed: when the target sector is fixed (Health Care in this example), transferring a portfolio from a source sector with lower transfer risk is more likely to achieve a higher Sharpe ratio. In particular, Information Technology, Health Care, and Consumer Discretionary are desirable source sectors when the target sector is Health Care, while Energy is not a suitable choice.
\begin{figure}[!ht]
    \centering
    \includegraphics[width=.75\textwidth]{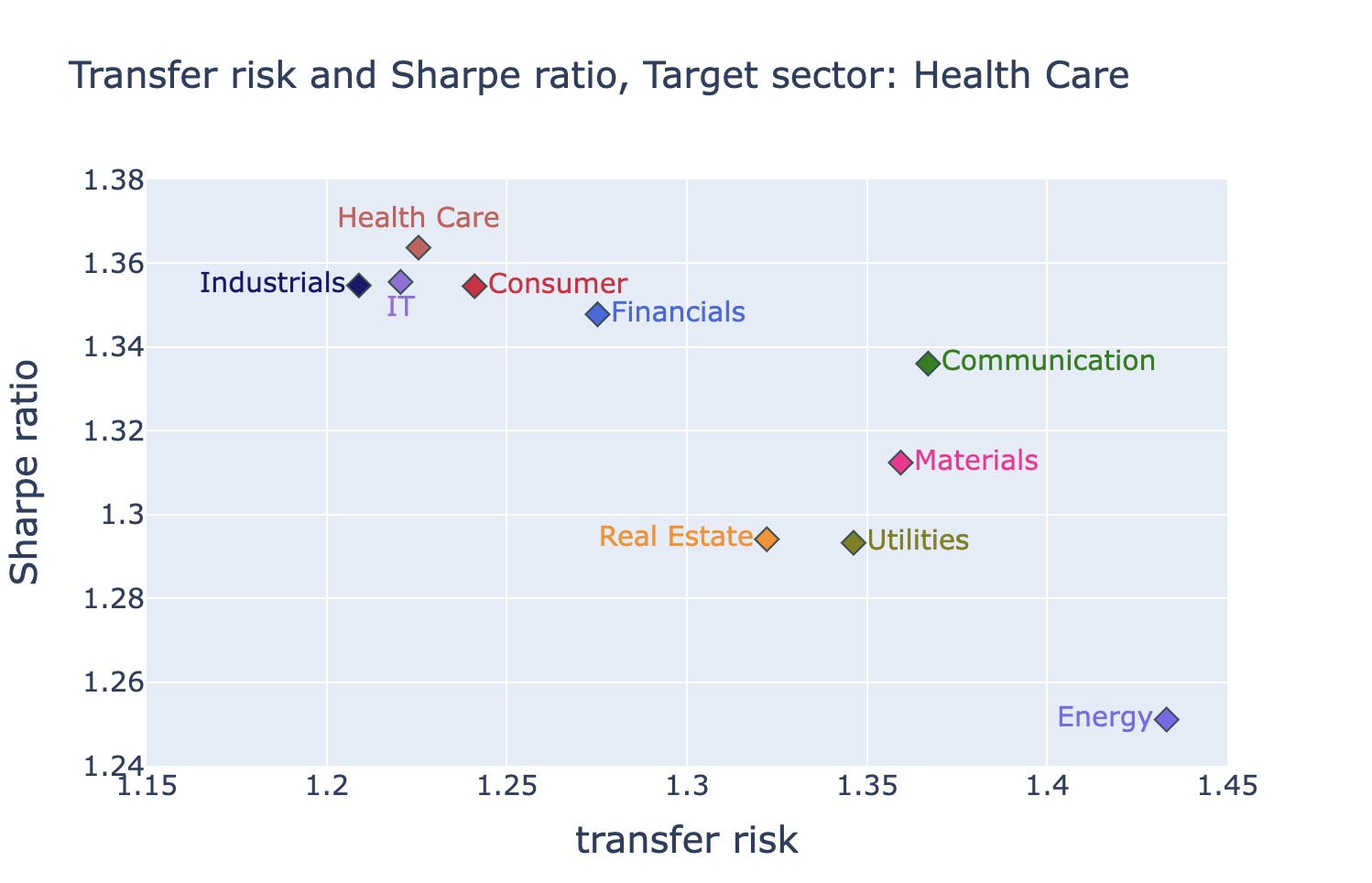}
    \caption{Transfer risk and Sharpe ratio, transferring to Health Care sector in S\&P500 stocks.}
    \label{fig:sector_sp500_health}
\end{figure}

Table \ref{tab:cross_sector_corr} records the correlation between Sharpe ratios and transfer risks when transferring portfolios from various source sectors to each target sector. Lower correlation implies that the target sector may benefit more from transfer learning and the transferability is better captured by transfer risk. 
Table \ref{tab:sp500_corr} is from the experiments using S\&P500 source stocks, while Table \ref{tab:nonsp500_corr} is from the experiments using non-S\&P500 source stocks.


A number of  patterns are observed from above. For sectors such as Health Care and Information Technology, large negative correlations are revealed in Table \ref{tab:cross_sector_corr}, regardless of whether  the stocks are chosen from S\&P500 or not. This implies that the transfer risk appropriately encodes the statistical property and characterizes the transferability of portfolios in those sectors. In constrast, for sectors such as Utilities and Real Estate, the correlations shown in Table \ref{tab:cross_sector_corr} are not significant, regardless of whether the stocks are chosen from S\&P500 or not. This may be due to  the fact that companies in Utilities and Real Estate sectors tend to be affected by underlying spatial factors and also changes in regulatory policies. Those aspects may not be fully captured by the transfer risk. In addition, for Energy sector, the correlation is more significant when considering non-S\&P500 stocks. This may be due to the industry concentration of Energy sector in S\&P500 Index: Energy sector in S\&P500 Index is highly concentrated in a few large companies, and the portfolio's performance is largely driven by some company-specific factors which the transfer risk fails to capture. The effect of industry concentration dwindles when non-S\&P500 stocks are considered.

\begin{table}[H]
    \begin{subtable}[H]{0.45\textwidth}
        \centering
        \begin{tabular}{cc}
            \toprule
            \toprule
            \textbf{Target Sector} & \textbf{Correlation} \\
            \midrule
            \textbf{Health Care} & -0.88 \\
            \textbf{Communication} & -0.82 \\
            \textbf{Materials} & -0.78 \\
            \textbf{IT} & -0.71 \\
            \textbf{Financials} & -0.60 \\
            \textbf{Consumer} & -0.56 \\
            \textbf{Industrials} & -0.54 \\
            \textbf{Real Estate} & -0.37 \\
            \textbf{Utilities} & -0.20 \\
            \textbf{Energy} & 0.04 \\
            \bottomrule
            \bottomrule
        \end{tabular}
       \caption{S\&P500 stocks.}
       \label{tab:sp500_corr}
    \end{subtable}
    \hfill
    \begin{subtable}[H]{0.45\textwidth}
        \centering
        \begin{tabular}{cc}
            \toprule
            \toprule
            \textbf{Target Sector} & \textbf{Correlation} \\
            \midrule
            \textbf{IT} & -0.67 \\
            \textbf{Materials} & -0.49 \\
            \textbf{Health Care} & -0.47 \\
            \textbf{Communication} & -0.46 \\
            \textbf{Energy} & -0.40 \\
            \textbf{Industrials} & -0.30 \\
            \textbf{Consumer} & -0.28 \\
            \textbf{Real Estate} & -0.14 \\
            \textbf{Financials} & 0.10 \\
            \textbf{Utilities} & 0.17 \\
            \bottomrule
            \bottomrule
        \end{tabular}
        \caption{Non-S\&P500 stocks.}
        \label{tab:nonsp500_corr}
     \end{subtable}
     \caption{Correlation between risk and Sharpe ratio for transfer from other sectors to the target.}
     \label{tab:cross_sector_corr}
\end{table}


\subsubsection{Cross-Frequency Transfer} 
In the following numerical experiments, we focus on transfer learning between different trading frequencies, ranging from mid-frequency to low-frequency: 1-minute, 5-minute, 10-minute, 30-minute, 65-minute, 130-minute and 1-day. The findings demonstrate that transferring a low-frequency portfolio (1-day) to higher frequencies results in relatively high transfer risks and poor transfer learning performances. This discrepancy arises from the distinct statistical properties of intraday price movements in mid/high-frequency trading compared to cross-day price movements in low-frequency trading, affecting the transfer process. Conversely, within the mid/high-frequency regime (1-minute to 130-minute), the study reveals that 65-minute and 130-minute frequencies serve as better candidates for the source frequency due to more robust mean and covariance estimations. Consequently, these frequencies lead to improved transfer learning performance after fine-tuning.

More specifically, given a source frequency and a target frequency, we first randomly sample ten stocks ($d=10$) from the fifty largest US companies by market capitalization. Then, three data sets will be constructed accordingly:

\begin{enumerate}
    \item \textbf{Source training data}: it consists of the returns of ten stocks, sampled under a given source frequency, from February 2016 to September 2019.

    \item \textbf{Target training data}: it consists of the returns of ten stocks, sampled under a given target frequency, from February 2016 to September 2019.

    \item \textbf{Target testing data}: it consists of the returns of ten stocks, sampled under a given target frequency, from September 2019 to February 2020.
\end{enumerate}

The transfer learning scheme applied in the experiments is the same as before: the portfolio is first pre-trained by solving \eqref{eqn:portfolio_opt_source} with the mean and covariance estimated from the source-frequency training data, then fine-tuned by solving \eqref{eqn:portfolio_opt_target} with the mean and covariance estimated from the target-frequency training data. Meanwhile, following the usual setting in mid/high-frequency trading, we assume that over-night holding is not allowed for trading frequencies ranging from 1-minute to 130-minute. More specifically, when over-night holding not is allowed, the price movement after the market close and before the market open will not be included in the mean and covariance estimation. Finally, the performance of the portfolio is evaluated through its Sharpe ratio on the target-frequency testing data. The regularization parameter $\lambda$ in \eqref{eqn:portfolio_opt_target} is set to be $0.2$. Meanwhile, the computation of transfer risk follows the approach described in Section \ref{subsec:port_opt_setup}, using the source-frequency training data and target-frequency testing data. 

For each source-target frequency pair (in total $7\times17=49$ pairs), two hundred random experiments with different stock selections are conducted, and we record the average Sharpe ratio and average transfer risk of those random experiments. The results are presented in Figure \ref{fig:freq_intraday_130min}, Figure \ref{fig:cross_freq_intraday} and Table \ref{tab:cross_freq_corr}. 

For example, Figure \ref{fig:freq_intraday_130min} shows the relation between (average) Sharpe ratios and (average) transfer risks when transferring portfolios from various source frequencies to the target frequency of 130-minute. In general, the negative correlation between Sharpe ratios and scores is observed: when the target frequency is fixed (130-minute in this example), transferring a portfolio from a source frequency with lower transfer risk corresponds to a higher Sharpe ratio. In particular, source frequencies such as 130-minute and 65-minute, which are closer to the 130-minute target frequency, are desirable source tasks, while 1-minute or 1-day is less suitable.

To see more clearly the relation between different frequencies, in Figure \ref{fig:cross_freq_intraday}, we plot the heat maps of transfer risks and Sharpe ratios when transferring across all frequencies. Here the transfer risks and Sharpe ratios are again rescaled linearly, so that for each target frequency, the values will range from zero to one. Meanwhile, Table \ref{tab:cross_freq_corr} records the correlation between Sharpe ratios and transfer risks when transferring portfolios from various source frequencies to each target frequency, following the same setting as Figure \ref{fig:cross_freq_intraday}.

From Figure \ref{fig:cross_freq_intraday}, it is observed that the transfer risks from 1-day frequency to other higher frequencies are relatively high, resulting in poor transfer learning performances as well. This demonstrates a natural discrepancy between low-frequency trading and mid/high-frequency trading: mid/high-frequency trading aims to capture intraday stock price movements by not allowing over-night holding, while low-frequency trading intends to capture price movements across trading days. Consequently, the difference between the underlying statistical properties of intraday price movements and cross-day price movements hurts the performance of transferring a low-frequency portfolio to a mid/high-frequency portfolio. 

Meanwhile, for transfer learning inside the mid/high-frequency regime (1-minute to 130-minute), the results in Figure \ref{fig:cross_freq_intraday} reveal that 65-minute and 130-minute are more appropriate candidates for the source frequency, since they lead to much better transfer learning performance, compared to other higher source frequencies. This may be due to the fact that under 65-minute and 130-minute frequencies, the mean and covariance estimations are more robust, hence resulting in a robust source portfolio which performs well after fine-tuning.

\begin{figure}[H]
    \centering
    \includegraphics[width=0.75\textwidth]{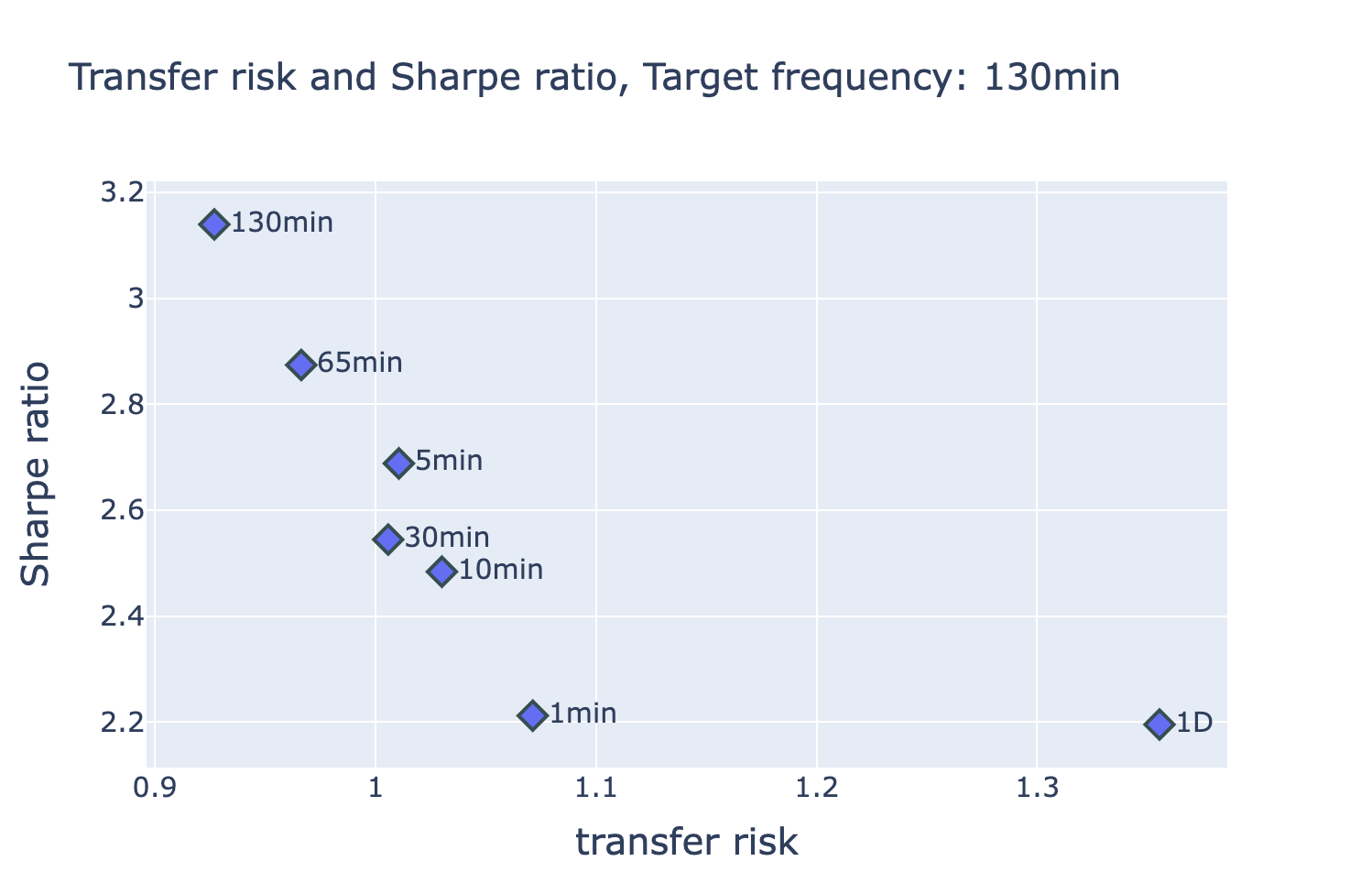}
    \caption{Transfer risk and Sharpe ratio, transferring to 130-minute frequency.}
    \label{fig:freq_intraday_130min}
\end{figure}


\begin{figure}[H]
     \centering
     \begin{subfigure}[b]{0.49\textwidth}
         \centering
         \includegraphics[width=\textwidth]{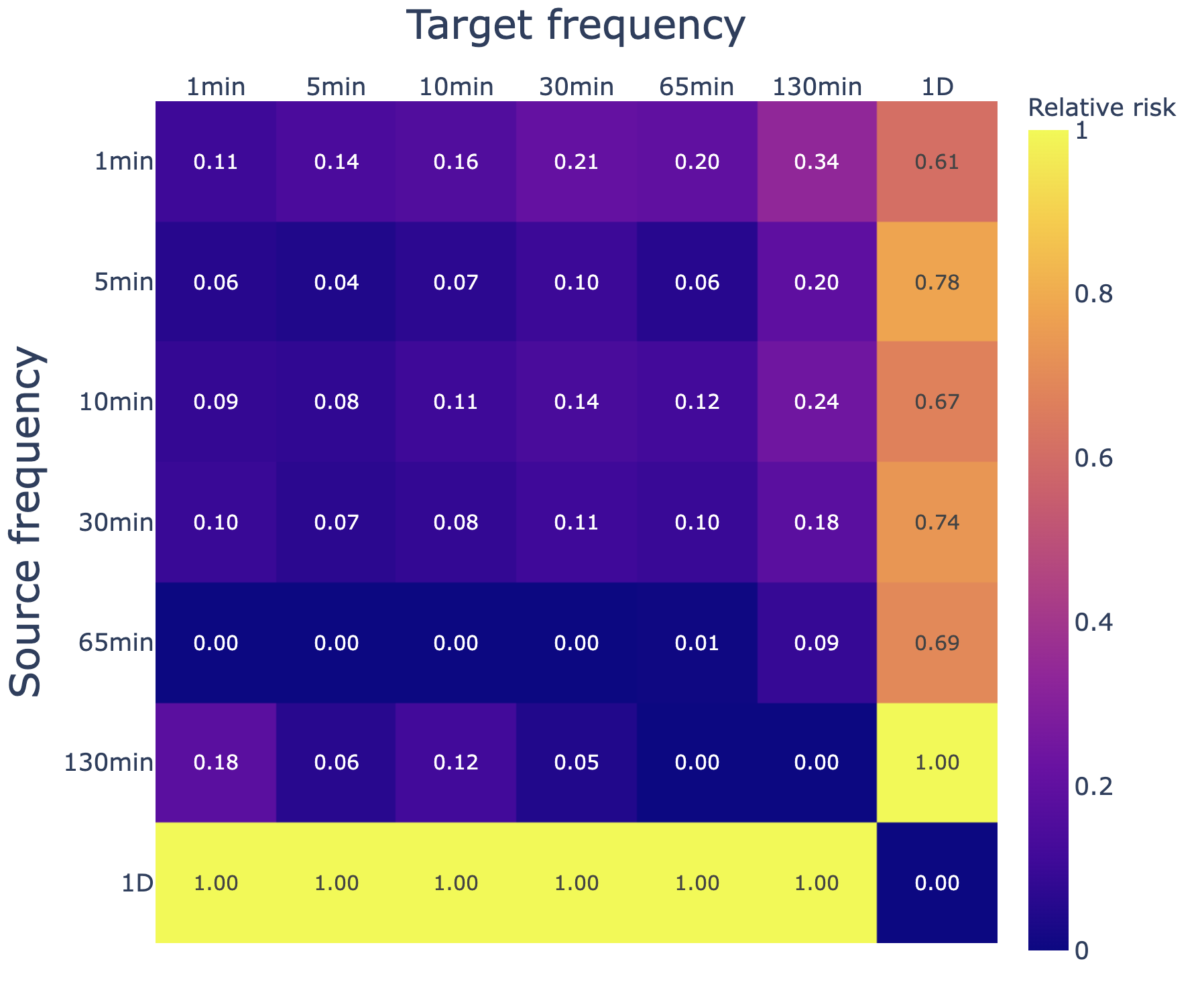}
         \caption{Risk for cross-frequency transfer.}
         \label{fig:score_intraday}
     \end{subfigure}
     \hfill
     \begin{subfigure}[b]{0.49\textwidth}
         \centering
         \includegraphics[width=\textwidth]{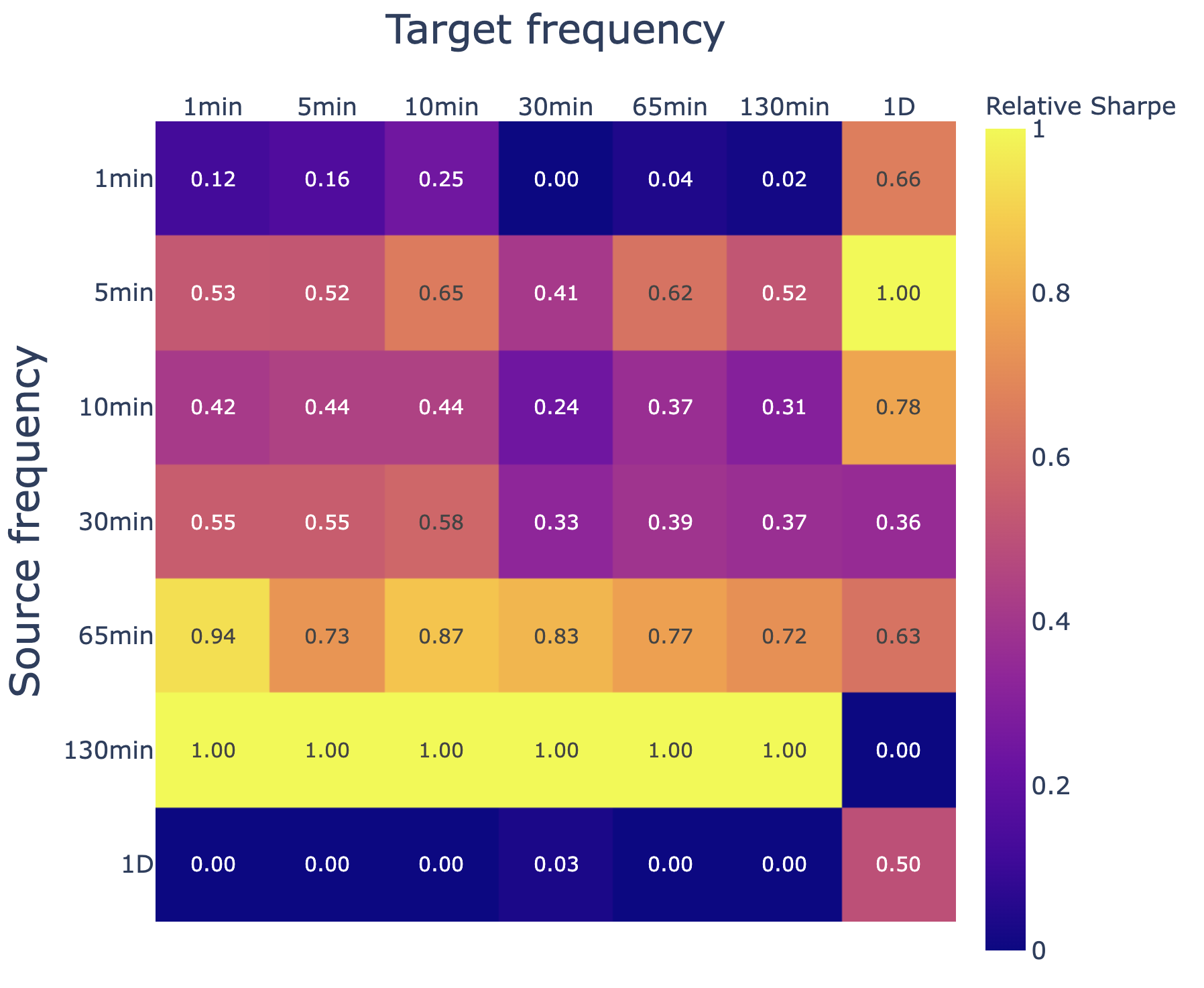}
         \caption{Sharpe ratio for cross-frequency transfer.}
         \label{fig:sharpe_intraday}
     \end{subfigure}
     \caption{Relative risk and Sharpe ratio for transfer across frequencies.}
     \label{fig:cross_freq_intraday}
\end{figure}


\begin{table}[H]
    \centering
    \begin{tabular}{cc}
        \toprule
        \toprule
        \textbf{Target Frequency} & \textbf{Correlation} \\
        \midrule
        \textbf{1 min} & -0.59 \\
        \textbf{5 min} & -0.70 \\
        \textbf{10 min} & -0.74 \\
        \textbf{30 min} & -0.59 \\
        \textbf{65 min} & -0.70 \\
        \textbf{130 min} & -0.76 \\
        \textbf{1 Day} & -0.19 \\
        \bottomrule
        \bottomrule
    \end{tabular}
    \caption{Correlation between risk and Sharpe ratio for transfer from other frequency to the target.}
    \label{tab:cross_freq_corr}
\end{table}

\section{Conclusion}
This paper proposes a novel concept of transfer risk is introduced. Through extensive numerical experiments on classical  financial problems, we show that prior to starting a full-scale transfer learning scheme, transfer risk is an easy-to-compute and viable quantity as a prior estimate of the final learning outcome.

\bibliographystyle{apalike}
\bibliography{refs}

\end{document}